\documentclass[11pt]{article}

\usepackage{amsmath}
\usepackage{amsthm}
\usepackage{aliascnt}
\usepackage{authblk}

\usepackage{amssymb}
\usepackage{graphicx}
\usepackage{multicol}
\usepackage{multirow}
\usepackage{color}
\usepackage{bm}
\usepackage{bbm}
\usepackage[letterpaper,margin=1in,bottom=1in]{geometry}
\usepackage[utf8]{inputenc}
\usepackage[english]{babel}
\usepackage{mathtools}
\usepackage{aliascnt}

\newcommand{\ketbra}[2]{\ensuremath{\ket{#1}\!\bra{#2}}}

\DeclarePairedDelimiter\ket{\lvert}{\rangle}
\DeclarePairedDelimiter\bra{\langle}{\rvert}

\newcommand{\proj}[1]{\ketbra{#1}{#1}}
\newcommand{\E}{\mathbb{E}}
\newcommand{\Prb}{\operatorname{Pr}}

\newcommand{\mathds}[1]{\mathbbm{#1}}
\newcommand{\mathup}[1]{\mathrm{#1}}

\usepackage{algorithm}
\usepackage{algpseudocode}

\usepackage{tabularx}
\usepackage{booktabs}
\usepackage{enumitem}
\usepackage{threeparttable}
\usepackage{adjustbox}
\usepackage{xspace}

\usepackage{tikz}
\usetikzlibrary{arrows}
\usetikzlibrary{quantikz2}
\usepackage[
  pagebackref,
  colorlinks,
  linkcolor=magenta,
  citecolor=magenta
]{hyperref}
\usepackage[capitalize,noabbrev]{cleveref}

\newtheorem{theorem}{Theorem}[section]
\newaliascnt{axiom}{theorem}

\aliascntresetthe{axiom}
\newaliascnt{lemma}{theorem}
\newtheorem{lemma}[lemma]{Lemma}
\aliascntresetthe{lemma}
\newaliascnt{corollary}{theorem}
\newtheorem{corollary}[corollary]{Corollary}
\aliascntresetthe{corollary}
\newaliascnt{proposition}{theorem}
\newtheorem{proposition}[proposition]{Proposition}
\aliascntresetthe{proposition}
\newaliascnt{fact}{theorem}

\aliascntresetthe{fact}

\newtheorem{definition}{Definition}[section]

\newtheorem{remark}{Remark}[section]

\crefname{theorem}{Theorem}{Theorems}
\Crefname{theorem}{Theorem}{Theorems}
\crefname{axiom}{Axiom}{Axioms}
\Crefname{axiom}{Axiom}{Axioms}
\crefname{lemma}{Lemma}{Lemmas}
\Crefname{lemma}{Lemma}{Lemmas}
\crefname{corollary}{Corollary}{Corollaries}
\Crefname{corollary}{Corollary}{Corollaries}
\crefname{proposition}{Proposition}{Propositions}
\Crefname{proposition}{Proposition}{Propositions}
\crefname{fact}{Fact}{Facts}
\Crefname{fact}{Fact}{Facts}
\crefname{remark}{Remark}{Remarks}
\Crefname{remark}{Remark}{Remarks}

\newcommand{\Qreg}{\mathup{Q}}
\newcommand{\Wreg}{\mathup{W}}
\newcommand{\Hdbreg}{\mathup{H}}
\newcommand{\Gdbreg}{\mathup{G}}

\newcommand{\calA}{\mathcal{A}}
\newcommand{\calB}{\mathcal{B}}

\newcommand{\FilteredSum}{\emph{$\ell$-Sum-Filtered Parity}\xspace}

\title{Quantum Query Advantage Requires Space}
\author[1,2]{Minbo Gao\thanks{%
  \href{mailto:gaomb@ios.ac.cn}{\nolinkurl{gaomb@ios.ac.cn}} or
  \href{mailto:gmb17@tsinghua.org.cn}{%
    \nolinkurl{gmb17@tsinghua.org.cn}}.}}
\author[3]{Zhengfeng Ji\thanks{%
  \href{mailto:jizhengfeng@tsinghua.edu.cn}{%
    \nolinkurl{jizhengfeng@tsinghua.edu.cn}}.}}
\author[3]{Ziyi Xie\thanks{%
  \href{mailto:xie-zy21@tsinghua.edu.cn}{%
    \nolinkurl{xie-zy21@tsinghua.edu.cn}}.}}
\affil[1]{Institute of Software, Chinese Academy of Sciences.}
\affil[2]{University of Chinese Academy of Sciences.}
\affil[3]{Tsinghua University.}
\date{}

\begin{document}

\maketitle

\begin{abstract}
Hao, Huang, and Liu~\cite{HHL26} recently showed that optimal quantum query complexity may require large workspace even for short-output problems, and asked whether a quantum query advantage over classical computation can itself require space. We resolve this question by exhibiting an explicit total Boolean function with quantum query complexity
$Q=\widetilde{\Theta}(M)$,
and randomized query complexity
$R=\Theta(M^{21/20})$,
whereas, for every fixed $0<\eta<1/100$ and $ S\le O(M^{1/100-\eta})$,
its $S$-space quantum query complexity satisfies
 $Q_S=\omega(M^{21/20})$.
Consequently,
\[
Q<R<Q_S,
\]
so the unrestricted quantum query advantage disappears under sufficiently small workspace.

The separation is obtained through a one-bit filtered-parity construction and a space-sensitive quantum lower bound based on compressed-oracle capacity and a new parity-to-capacity inequality, which may be of independent interest.
\end{abstract}

\newpage
\tableofcontents
\newpage

\section{Introduction}

Query complexity, which asks how much of an input an algorithm must inspect to
solve a problem, plays an important role in complexity theory. 
It provides a clean setting for understanding the power and limits of both
classical and quantum algorithms, and thus provides an extremely important
common framework for identifying quantum speedups. 
Based on the intuition that restrictions on workspace affect the time required
for computation and the fact that quantum memories are valuable, quantum
query--space tradeoffs have become an established topic in complexity theory.
At first glance, it seems trivial to state that bounded-space computation
requires more queries. 
Nevertheless, a substantial body of work establishes query--space tradeoffs for
fundamental classical problems, including long-output problems, that is,
problems with many bits of output, such as sorting \cite{BFK+81,BC82}, finding
unique elements \cite{Bea91}, and multi-collision finding \cite{Din20}, as well
as short-output or even Boolean problems, such as branching programs
\cite{BJS01}, randomized decision problems \cite{BSSV03}, and element
distinctness \cite{BFM+87,Yao94}. 
However, in the quantum setting, almost all results that we have for quantum
query--space tradeoffs seem to apply only to problems with long outputs, such as
sorting~\cite{Kla03,KSdW07}, function inversion~\cite{CGLQ20}, finding multiple
collision pairs~\cite{HM23}, and matrix problems~\cite{KSdW07,BKW24}.
As for short-output problems, Bera and SAPV~\cite{BS24} managed to show a
one-bit quantum time--space tradeoff in a restricted model of generalized
quantum branching programs, where each node's outgoing transition vectors for
query answers $0$ and $1$ differ only by a phase.

A natural approach to showing a query--space tradeoff for a \emph{decision}
problem (or any short-output problem) is to start with a problem whose known
query-optimal quantum algorithms all use substantial workspace.
The \emph{collision} problem is a natural candidate.
For a function with domain size $N$, the Brassard--H{\o}yer--Tapp algorithm
finds a collision using $O(N^{1/3})$ queries and $O(N^{1/3})$
qubits~\cite{BHT97}.
Aaronson and Shi subsequently proved a matching $\Omega(N^{1/3})$ query lower
bound, even with unrestricted workspace~\cite{AS04}, yet no near-optimal
algorithms using less space have been found: a Grover-based algorithm uses
only $O(\log N)$ qubits of workspace but makes $O(\sqrt N)$ queries~\cite{Gro96,
  Aar21}.
Aaronson lists finding query--space tradeoffs (or query--space lower bounds) for
the collision problem as one of the main open problems in~\cite{Aar21}, but
little progress has been made since then.
Element distinctness, a problem closely related to collision finding, is equally
important.
Its quantum walk algorithm uses $O(N^{2/3})$ queries and polynomial
workspace~\cite{Amb07}, and this query bound matches the unrestricted quantum
lower bound~\cite{AS04}.

The subtlety is that known methods rely on a progress measure, a technique that
tracks the information acquired by an algorithm.
They bound its increase at each step of the algorithm, showing that few queries
yield little information overall.
Consequently, the longer an answer to the problem is, the more information is
needed to uncover it, and thus the more queries are needed.
These techniques work well only on problems that require a long output and do
not directly extend to short-output problems, let alone Boolean-output ones.
 
Recent work by Hao, Huang, and Liu makes substantial progress~\cite{HHL26}.
They show that even for problems with short (yet not one-bit) outputs, achieving
the optimal quantum query complexity can require a large workspace in the
quantum random oracle model.
Notably, the problem they study, the nested collision problem, is a variant of
the collision problem.
Based on compressed-oracle techniques, they introduce a two-oracle recording
method that transfers time--space tradeoffs from the better-understood
long-output setting to their short-output problem.
However, their result establishes only a query--space tradeoff for this problem
and does not determine whether bounded-space quantum algorithms lose their
advantage over unrestricted classical algorithms.
Hence, beyond establishing a tradeoff for Boolean-output problems, a stronger
question is whether the tradeoff can eliminate the quantum query advantage in
bounded space.
Thus, in this work, we focus on the following question:
\begin{quote}
  \emph{Can the size of the workspace determine whether a quantum query
    advantage for a Boolean function exists?}
\end{quote}

We answer this question affirmatively.
\begin{theorem}[Informal]
\label{thm:intro-main}
There exists a total Boolean function $f$ with domain size $R$ such that if space $S$ satisfies
\begin{equation*}
  S = O(R^{\frac{1}{401}}),
\end{equation*}
then for sufficiently large $n$,
\begin{equation*}
  Q(f)
  <R(f)
  <Q_S(f),
\end{equation*}
where $R$ is bounded-error randomized query complexity, $Q$ is the bounded-error quantum query complexity, and $Q_S$ its bounded-error quantum query complexity with workspace size at most $S$.
\end{theorem}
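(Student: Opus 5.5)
The plan is to assemble three separate bounds for a single explicit function: an unrestricted quantum upper bound, a two-sided randomized bound, and a space-restricted quantum lower bound. Following the abstract, the function will be a one-bit \emph{filtered-parity} construction (the \FilteredSum problem). The input consists of $\ell$ independent blocks, each an instance of a collision-type search problem whose quantum complexity is $\widetilde{\Theta}(M)$ but for which known query-optimal algorithms need large workspace. A cheap filter reads only a few bits per block and decides which blocks count. The output is the parity, over the retained blocks, of their one-bit answers, for example the parity of the number of blocks whose filtered sum lies in a prescribed set.

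The parameters will be chosen so that the quantum cost is $\widetilde{\Theta}(M)$ while classical algorithms pay $M^{21/20}$. Writing $M$ for the target quantum cost, I would pick the block size $n$ and number of blocks $\ell$ so that:
\begin{itemize}
  \item the quantum upper bound is $\ell\cdot\widetilde O(Q_{\text{block}})$ by sequential composition, with error reduction only costing logarithmic factors;
  \item the randomized lower bound is $\ell\cdot R_{\text{block}}$ by a direct-sum/composition argument for parity of independent instances;
  \item the ratio $R_{\text{block}}/Q_{\text{block}} = M^{1/20}$ is a polynomial gap, such as collision's $n^{1/2}$ versus $n^{1/3}$.
\end{itemize}
For totality, the promise of the collision problem will be removed by a totalizing gadget such as the filter itself, so that every input is defined. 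The randomized upper bound then follows from solving each block classically, and the randomized lower bound follows from a parity XOR lemma for randomized query complexity, which holds for total functions with bounded error.

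The main work, and the main obstacle, is the space lower bound $Q_S=\omega(M^{21/20})$ for $S\le M^{1/100-\eta}$. I would proceed in three steps.
\begin{enumerate}
  \item Run the compressed-oracle framework of Zhandry, together with the two-oracle recording of~\cite{HHL26}, in order to define a \emph{capacity}: the amount of useful collision information recorded in the database that can be carried across a segment of the algorithm.
  \item Split a $T$-query, $S$-qubit algorithm into segments of $S$-proportional length. A time--space tradeoff for the long-output version shows that each segment can carry forward only $O(S)$ units of capacity. Hence over $T$ queries the total capacity gained is bounded by roughly $(T/\text{segment})\cdot S\cdot(\text{gain per segment})$.
  \item Prove the parity-to-capacity inequality: computing the parity of $\ell$ independent block answers with advantage $\delta$ forces the final state to have capacity about $\ell$ on essentially all blocks. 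The reason is that parity offers no partial credit, so any block left unresolved drives the advantage to $2^{-\Omega(\#\text{unresolved})}$.
\end{enumerate}
I would attempt the parity-to-capacity inequality by a Fourier/hybrid argument over the block answers, bounding the bias by a product of per-block distinguishing amplitudes, each controlled by that block's recorded capacity.

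Combining the pieces gives $T \ge \ell\cdot n^{1/2}/\mathrm{poly}(S)$ in the small-space regime. With the chosen exponents this exceeds $M^{21/20}$ by a polynomial factor whenever $S\le M^{1/100-\eta}$, which yields $Q<R<Q_S$. The remaining work is to track the exponents, including $1/401$ relative to the domain size, which I expect to be routine but fiddly.
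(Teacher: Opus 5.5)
\textbf{Gap 1: collision-type blocks cannot give $R<Q_S$.} For collision blocks $R_{\text{block}}=\Theta(n^{1/2})$. So your concluding bound $T\ge \ell\,n^{1/2}/\mathrm{poly}(S)$ is at most $\Theta(R(f))$ and can never exceed it. For collision-type blocks this is intrinsic, not a matter of tuning exponents. The Grover-based collision finder uses $O(\sqrt n)$ queries and only $O(\log n)$ qubits. Solving your blocks one after another while keeping only a running parity therefore gives $Q_S(f)=\widetilde O(R(f))$ already for logarithmic $S$. The same holds for total variants such as element distinctness, via Grover over pairs.

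A reversal needs a component on which a \emph{large-memory classical} algorithm beats every small-memory quantum algorithm, with the quantum speedup coming from somewhere else. That is how the paper's function is built, and $\ell=4$ there is the arity of the zero-sum filter, not a number of blocks. The input is:
\begin{itemize}
\item a table $H\colon[M]\to[N]$ with $N=\Theta(M^3)$;
\item a Boolean table $G$ on $4$-tuples;
\item a separate block $z$ of length $L=\lceil M^{21/20}\rceil$.
\end{itemize}
The output is $\bigl(\bigoplus_{\mathbf x\in Y_H}G(\mathbf x)\bigr)\oplus\mathrm{OR}(z)$. The parity term is dropped when $|Y_H|>\lceil 2\mu\rceil$, where $\mu=\binom{M}{4}/N=\Theta(M)$. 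This threshold, not the filter, makes the function total with bounded worst-case classical cost. The complexities then follow:
\begin{itemize}
\item Classically, store all of $H$ in $O(M\log M)$ bits, enumerate $Y_H$ offline, and read $O(M)$ bits of $G$ plus all of $z$. So $D=R=\Theta(L)$, with the lower bound from setting $G\equiv 0$.
\item Quantumly, Grover handles $z$, so $Q=\widetilde\Theta(M)$. The lower bound comes from fixing a good $H^*$ and $z=0$, which leaves parity on $\Theta(M)$ bits.
\end{itemize}
The speedup lives entirely in the OR block and the reversal entirely in the parity block.

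\textbf{Gap 2: your capacity step cannot give a nontrivial bound.} Parity of $\ell$ block answers asks for only $O(\ell)$ units of recorded information, which is below $S$. An argument that grants $O(S)$ units per segment therefore says nothing. What you need is a one-bit output that still forces capacity $\Theta(M)\gg S$. That is the role of the random labels $G$ on the $\Theta(M)$ tuples of $Y_H$:
\begin{itemize}
\item The HHL two-oracle bound gives expected recording weight $K=O(S^{2\ell+2}T^2N^{-2/(\ell+1)})$ on the $G$-cells indexed by $Y_H$.
\item The paper's exact inequality $K_H\ge|Y_H|\beta_H^2/4$ comes from $(1-2K_H/|Y_H|)^2+\beta_H^2\le 1$. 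That in turn uses that $I-\tfrac{2}{|Y|}W_Y$ and $B\otimes X_Y$ anticommute and both square to at most $I$.
\item Fixing $z=0$ and averaging over the good $H$ then forces $K=\Omega(\varepsilon^2 M)$.
\end{itemize}
Comparing the two bounds gives $S^5T=\widetilde\Omega(M^{11/10})$, hence $T=\omega(M^{21/20})$ for $S\le M^{1/100-\eta}$. Since the input length is $\Theta(M^4)$, this is where the exponent $1/401$ comes from.

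Separately, your intended form ``bias $2^{-\Omega(\#\text{unresolved})}$'' cannot hold for an expected capacity. Run an exact parity algorithm with probability $p$ and guess otherwise. This has bias $p$ but expected recording weight at most $p|Y|$, so only a polynomial relation like the paper's is possible.
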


Our starting point is a filtered-parity problem, which has one-bit output, and predicting the output bit forces a quantum algorithm to record information with many queries.
We formalize this through a new parity-to-capacity inequality and combine it with the compressed-oracle capacity bound of Hao, Huang, and Liu~\cite{HHL26} to obtain a strong query--space tradeoff despite the one-bit output.
Recall that Grover's algorithm gives a quadratic speedup for OR.
The mechanism above could be turned into a worst-case total Boolean function
$f$ satisfying $Q(f)<R(f)<Q_S(f)$, mainly by appending an independent OR block.

\subsection{Main Results}

We first introduce a family of one-bit oracle problems, called $\ell$-Sum-Filtered Parity.
The problem is defined over two independent random oracles: one selects a structured set of $\ell$-tuples, and the other assigns a Boolean value to each tuple.
For an integer $k$, write $[k] = \{0, 1, \ldots, k - 1\}$.

\begin{definition}[$\ell$-Sum-Filtered Parity]\label{def:task}
Fix a constant integer $\ell \ge 4$.
Let
\begin{equation*}
  N = 2^n,
  \qquad
  M = \lfloor N^{\frac{1}{\ell-1}} \rfloor,
\end{equation*}
and let
\begin{equation*}
  H\colon[M]\to[N],
  \qquad
  G\colon{[M]}^\ell\to\{0,1\}
\end{equation*}
be independent uniformly random functions.
Define
\begin{equation*}
  Y_H := \left\{(x_1,\ldots,x_\ell)\in{[M]}^\ell :
  x_1<\cdots<x_\ell,\quad
  \sum_{i=1}^{\ell} H(x_i)=0 \pmod N
  \right\}.
\end{equation*}
Given oracle access to $(H,G)$, the task is to output
$F(H,G):=\bigoplus_{x\in Y_H}G(x)$,
where the parity of the empty set is zero.
\end{definition}

Our first result gives a general query--space tradeoff for this problem.

\begin{theorem}[General query--space tradeoff]
\label{thm:main}
Fix a constant integer $\ell\ge4$.
The $\ell$-Sum-Filtered Parity problem has the following properties.
\begin{enumerate}
    \item
    There is a deterministic classical algorithm that succeeds with
    probability $1-o(1)$, uses $O(M)$ oracle queries, and uses
    $O(M\log N)$ bits of space.

    \item
    For every constant
    $0<\varepsilon\le1/2$, every quantum algorithm that succeeds with
    probability at least $1/2+\varepsilon$, uses $S$ qubits of space,
    and makes $T$ oracle queries satisfies
    \begin{equation*}
      S^{\ell+1}T = \Omega \left(
        N^{\frac{1}{2(\ell-1)}+\frac{1}{\ell+1}} \right),
    \end{equation*}
    whenever $S=\Omega(\log N)$ and
    $S=O(N^{\frac{1}{(\ell+1)(2\ell+1)}})$.
\end{enumerate}
\end{theorem}

The choice $M=\Theta(N^{\frac{1}{\ell-1}})$ balances the number of $H$-queries and selected $G$-entries.
Consequently, within the stated space range, if
$S=o\!\left(N^{\frac{\ell-3}{2(\ell-1){(\ell+1)}^2}}\right)$,
then every such quantum algorithm makes
$\omega(N^{\frac{1}{\ell-1}})$ queries, whereas the classical algorithm
makes only $O(N^{\frac{1}{\ell-1}})$ queries.

We next turn this separation into one for an explicit total Boolean function.

\begin{definition}[$\ell$-Sum-OR Parity]\label{def:intro-sum-or}
Let $N=2^n$ and $M=\lfloor N^{1/(\ell-1)}\rfloor$.
Let $H\in{\{0,1\}}^{[M]\times[n]}$,
$G\in{\{0,1\}}^{{[M]}^\ell}$, and
$z\in{\{0,1\}}^L$ be arbitrary input blocks, where
$L=\lceil M^{1+(\ell-3)/(4(\ell+1))}\rceil$.
For $x\in[M]$, let $H(x)\in[N]$ be the integer encoded by
$(H_{x,1},\ldots,H_{x,n})$, and similarly for $G(x)$.
Let
\begin{equation*}
  Y_H =
  \left\{
  (x_1,\ldots,x_\ell)\in{[M]}^\ell :
  x_1<\cdots<x_\ell,\quad
  \sum_{i=1}^{\ell} H(x_i)=0 \pmod N
  \right\}.
\end{equation*}
Given query access to these Boolean blocks, the required output is determined as
follows.
If
\begin{equation*}
  |Y_H|
  \le
  \left\lceil
    2\binom{M}{\ell}/N
  \right\rceil,
\end{equation*}
then
\begin{equation*}
  \mathsf{SumOR}_{\ell,n}(H, G,z)
  =
  \left(
    \bigoplus_{\mathbf{x}\in Y_H}G(\mathbf{x})
  \right)
  \oplus\operatorname{OR}(z).
\end{equation*}
Otherwise,
\begin{equation*}
  \mathsf{SumOR}_{\ell,n}(H,G,z)
  =
  \operatorname{OR}(z).
\end{equation*}
Here we set the parity of the empty set to be zero.
\end{definition}

Our main result gives the desired separation between unrestricted quantum,
randomized classical, and space-bounded quantum query complexity.

\begin{theorem}[Total quantum speedup and small-space reversal]
\label{thm:intro-sum-or}
Fix a constant integer $\ell\ge4$.
For all sufficiently large $n$,
\begin{equation*}
  D(\mathsf{SumOR}_{\ell,n})
  =R(\mathsf{SumOR}_{\ell,n})
  =\Theta\!\left(M^{1+\frac{\ell-3}{4(\ell+1)}}\right),
  \qquad
  Q(\mathsf{SumOR}_{\ell,n})=\widetilde\Theta(M).
\end{equation*}
Moreover, every $S$-qubit quantum algorithm with constant advantage and $T$
queries satisfies
\begin{equation*}
  S^{\ell+1}T
  =\widetilde{\Omega}\!\left(M^{3/2-2/(\ell+1)}\right)
\end{equation*}
for $ \Omega(\log M)
  \le S
  \le O\left(M^{\frac{\ell-1}{(\ell+1)(2\ell+1)}}\right)$.
Consequently, for every fixed
$0<\eta<\frac{\ell-3}{4{(\ell+1)}^2}$,
if $ S
  \le O(M^{\frac{\ell-3}{4{(\ell+1)}^2}-\eta})$,
then
\begin{equation*}
  Q(\mathsf{SumOR}_{\ell,n})
  <R(\mathsf{SumOR}_{\ell,n})
  <Q_S(\mathsf{SumOR}_{\ell,n}).
\end{equation*}
\end{theorem}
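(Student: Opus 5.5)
We prove the four claims separately. Throughout, $M=\lfloor N^{1/(\ell-1)}\rfloor$, $K:=\lceil 2\binom{M}{\ell}/N\rceil=\Theta(M)$, and $L=\lceil M^{1+(\ell-3)/(4(\ell+1))}\rceil$, so that $L=\omega(M\log N)$.

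\textbf{Upper bounds.} The deterministic algorithm proceeds in three steps.
\begin{enumerate}
\item It reads all of $H$, which costs $Mn=O(M\log M)$ queries.
\item It computes $Y_H$ offline. If $|Y_H|\le K$, it reads the at most $K$ selected entries of $G$. Each entry is one Boolean input here, since $G$ is Boolean-valued on $[M]^\ell$.
\item It reads all of $z$.
\end{enumerate}
The total is $O(M\log M+L)=O(L)$. For $Q$, we replace step 3 by Grover search on $z$, costing $O(\sqrt L)=o(M)$ queries. This gives $Q=O(M\log M)=\widetilde O(M)$.

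\textbf{Lower bounds on $Q$ and $R$.} For $R=\Omega(L)$, we restrict to inputs where $H$ is fixed with, say, $Y_H=\emptyset$ (or any fixed value of the parity). Then $\mathsf{SumOR}$ is $\operatorname{OR}(z)$ up to a fixed XOR, and $R(\operatorname{OR}_L)=\Omega(L)$. For $Q=\widetilde\Omega(M)$, we fix $z=0$ and fix $H$ so that $|Y_H|=\Theta(M)\le K$. Such $H$ exists because a random $H$ has $\mathbb{E}|Y_H|=\binom M\ell/N\approx K/2$. On those inputs the function is the parity of $\Theta(M)$ free bits of $G$, and parity on $m$ bits has $Q=\Theta(m)$. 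Together with $D\ge R$, this gives $D=R=\Theta(L)$ and $Q=\widetilde\Theta(M)$.

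\textbf{Space lower bound.} We reduce from \cref{thm:main}. Suppose an $S$-qubit algorithm $\mathcal A$ computes $\mathsf{SumOR}$ with constant advantage using $T$ queries. Given random oracles $(H,G)$ for $\ell$-Sum-Filtered Parity, we run $\mathcal A$ with $z=0^L$ hardwired. Each bit query to $H_{x,j}$ is simulated by one query to $H(x)$ together with $O(\log N)$ extra workspace and uncomputation. Alternatively, we count bit-queries directly, since $T$ bit queries imply at most $T$ word queries. The output agrees with $F(H,G)$ whenever $|Y_H|\le K$.

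By Markov's inequality, or more precisely a second-moment concentration of $|Y_H|$ around $\binom M\ell/N$, the event $|Y_H|>K$ has probability at most $1/2+o(1)$. This is not small enough by itself, so we instead use Chebyshev. The variance of $|Y_H|$ is $O(\mathbb{E}|Y_H|)$ because tuples sharing elements are rare and pairwise sums are nearly independent. Hence $\Pr[|Y_H|>K]=o(1)$, and the advantage is preserved up to $o(1)$. \cref{thm:main} then gives
\[
S^{\ell+1}T=\Omega\bigl(N^{\frac1{2(\ell-1)}+\frac1{\ell+1}}\bigr).
\]
Substituting $N=\Theta(M^{\ell-1})$ yields
\[
\widetilde\Omega\bigl(M^{1/2+(\ell-1)/(\ell+1)}\bigr)=\widetilde\Omega\bigl(M^{3/2-2/(\ell+1)}\bigr),
\]
and the space range translates to the one stated. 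The additive $O(\log N)$ workspace is absorbed since $S=\Omega(\log M)$.

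\textbf{Final separation.} With $S\le M^{\frac{\ell-3}{4(\ell+1)^2}-\eta}$ we get
\[
T\ge M^{3/2-2/(\ell+1)-\frac{\ell-3}{4(\ell+1)}+(\ell+1)\eta}/\mathrm{polylog}.
\]
A direct check shows $3/2-2/(\ell+1)-\frac{\ell-3}{4(\ell+1)}=1+\frac{\ell-3}{4(\ell+1)}$, since $\frac12-\frac{2}{\ell+1}=\frac{\ell-3}{2(\ell+1)}$. So $Q_S\ge L\cdot M^{(\ell+1)\eta}/\mathrm{polylog}=\omega(R)$, while $Q=\widetilde O(M)=o(L)$.

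The main obstacle is the concentration step. We need $\Pr[|Y_H|>K]=o(1)$ so that the truncation in the definition does not destroy the advantage inherited from \cref{thm:main}. We also need the existence of an $H$ with $|Y_H|=\Theta(M)$, which is used in the $Q$ lower bound. Both follow from the same variance computation over overlapping $\ell$-tuples, and we will carry that computation out carefully.
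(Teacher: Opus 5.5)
Your proposal is correct and follows the paper's proof. The unrestricted bounds use the same store-$H$/enumerate-$Y_H$/read-$z$ algorithm (with Grover on $z$ for $Q$), an OR restriction for $R$ (the paper fixes $G\equiv 0$ rather than choosing $H$ with $Y_H=\emptyset$), and a parity restriction at a good $H^*$ for $Q$. The concentration you need is the variance computation of \cref{lem:size-of-y}; note that it works because the zero-sum events are exactly pairwise independent for all distinct $\ell$-sets, overlapping or not, and not because overlapping tuples are rare.

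The only difference is packaging in the space bound. You convert worst-case success into average-case advantage $\varepsilon-o(1)$ and apply \cref{thm:main} as a black box. The paper instead uses the worst-case guarantee pointwise ($\beta_{H^*}\ge 2\varepsilon$ for every good $H^*$) in \cref{lem:sum-or-forces-capacity} and then applies the HHL capacity bound directly. Both routes rest on the same parity-to-capacity and capacity-upper-bound lemmas.
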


In particular, setting $\ell=4$ gives an explicit family of total Boolean
functions $f$ satisfying
\begin{equation*}
  Q(f)=\widetilde\Theta(M),
  \qquad
  R(f)=D(f)=\Theta(M^{21/20}),
  \qquad
  S^5Q_S(f)=\widetilde{\Omega}(M^{11/10}).
\end{equation*}
Therefore, for every fixed $0<\eta<1/100$ and
$ S
  \le
 O(M^{1/100-\eta})$,
we have
$Q(f)<R(f)<Q_S(f)$.


\subsection{Technique Overview}

The classical upper bound is straightforward.
The algorithm queries and stores all values of $H$, enumerates the filtered set $Y_H$, and queries $G$ only on tuples in $Y_H$.
With high probability, $|Y_H|=\Theta(M)$, which gives the claimed query and space complexities.

Assume for now that all queries to $G$ are classical.
Let $K$ count the distinct tuples $x\in Y_H$ on which an algorithm $\calA$
queries $G$.
The intuition of Hao, Huang, and Liu~\cite{HHL26} is that finding a collision
with high probability requires locating many tuples in $Y_H$ and explicitly
querying $G$ on them.
Since $Y_H$ is a random set specified by $H$, locating these tuples requires
time and space.
Thus, $K$ connects the collision requirement to a query--space lower bound.
The upper bound on $K$ applies to any problem with access to $(H,G)$.
An output that depends heavily on $G|_{Y_H}$ should, in turn, force large $K$
for any algorithm with high success probability.
This suggests a natural one-bit problem, $\ell$-Sum-Filtered Parity problem,
which takes $G$ to be Boolean ($N_0=2$) and outputs $\bigoplus_{x\in Y_H}G(x)$.

For fixed $H$, exact computation with classical queries to $G$ requires querying
every $x\in Y_H$, so $K\ge|Y_H|$.
Quantum queries may be in superposition, so counting distinct queried tuples is
no longer appropriate.
We instead use Zhandry's compressed oracle~\cite{Zha19} to measure the expected
recording weight on database entries indexed by $Y_H$, again denoted by $K$.
As in~\cite{HHL26}, a progress measure bounds this capacity in terms of time and
space.
Our main new step is to show that predicting the parity with constant advantage
nevertheless forces the capacity $K$ to be large.

Combining our new lower bound with the existing upper bound on the capacity $K$
gives the quantum part of the query--space tradeoff theorem (\cref{thm:main}).

\begin{remark}
  When considering the tradeoff for the total Boolean function
  $\mathsf{SumOR}_{\ell,n}$, the queries are bit-by-bit.
  However, $\ell$-Sum-Filtered Parity is a one-bit oracle problem, and the
  queries, by definition, are tuple-by-tuple.
  The two query models can be translated into each other using a standard
  encoding with logarithmic loss, so this does not fundamentally affect the
  tradeoff.
\end{remark}

\section{Problem Setup}

\subsection{Size of the Filtered Subset}\label{sec:size}

We first analyze the size of the filtered set $Y_H$.
\begin{lemma}\label{lem:size-of-y}
Fix a constant integer $\ell\ge4$.
Then
\begin{equation*}
  \begin{aligned}
    \Prb_H\left[
      \frac{1}{2N}\binom M\ell\le |Y_H|\le M
    \right]
    \ge 1-O(M^{-1}).
  \end{aligned}
\end{equation*}
\end{lemma}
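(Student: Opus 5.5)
The plan is a second-moment (Chebyshev) argument on $X := |Y_H| = \sum_{\mathbf{x}} I_{\mathbf{x}}$. The sum ranges over the $\binom{M}{\ell}$ increasing $\ell$-tuples $\mathbf{x}=(x_1<\cdots<x_\ell)$, and $I_{\mathbf{x}}$ is the indicator that $\sum_i H(x_i)\equiv 0 \pmod N$.

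\textbf{Mean.} Since the $x_i$ are distinct and $H$ is uniform, $H(x_\ell)$ is uniform and independent of the other values. Hence $\Prb[I_{\mathbf{x}}=1]=1/N$ and
\[
\mu := \E X = \binom{M}{\ell}/N.
\]
With $M=\lfloor N^{1/(\ell-1)}\rfloor$ we have $M^{\ell-1}\le N$ and $N<(M+1)^{\ell-1}$. Therefore
\[
\mu \le \frac{M^\ell}{\ell!\,N}\le \frac{M}{\ell!},
\qquad
\mu \ge \frac{M^\ell(1-O(\ell^2/M))}{\ell!\,(M+1)^{\ell-1}} = \Theta(M).
\]
So $\mu=\Theta(M)$ and $\mu\le M/24$, using $\ell\ge4$.

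\textbf{Variance.} Two tuples $\mathbf{x}\neq\mathbf{y}$ always differ in some element, say $y_j\notin\mathbf{x}$. Conditioned on all other $H$-values, $H(y_j)$ is still uniform. So $\Prb[I_{\mathbf{x}}=I_{\mathbf{y}}=1]=1/N^2$, the indicators are pairwise independent, and
\[
\operatorname{Var}X=\sum_{\mathbf{x}}\operatorname{Var}I_{\mathbf{x}}\le \mu.
\]
This pairwise-independence step is the crux. It works because each sum is uniform as soon as one summand is fresh, and it holds even when the two tuples share $\ell-1$ elements. Verifying it carefully is the only nonroutine point.

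\textbf{Chebyshev.} The two events in the lemma are the lower bound $X\ge\mu/2$ and the upper bound $X\le M$. Both thresholds lie at distance $\Omega(\mu)$ from $\mu$: we have $\mu/2=\mu-\mu/2$, and $M-\mu\ge 23\mu$. Chebyshev then gives
\[
\Prb\bigl[|X-\mu|\ge \mu/2\bigr]\le \frac{4\operatorname{Var}X}{\mu^2}\le\frac{4}{\mu}=O(M^{-1}),
\]
which covers both failure events at once. Note that the lower threshold in the statement, $\frac{1}{2N}\binom{M}{\ell}$, is exactly $\mu/2$.

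The only care needed is the bound $\mu=\Theta(M)$ coming from the floor in the definition of $M$. The lower bound $\mu=\Omega(M)$ is what makes the failure probability $O(1/M)$, and the upper bound $\mu\le M/\ell!$ is what keeps $M$ a constant factor above the mean.
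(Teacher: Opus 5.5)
Your proposal is correct and follows the paper's proof step for step: the indicator decomposition, the mean $\binom{M}{\ell}/N$, pairwise independence giving $\operatorname{Var}|Y_H|\le\mu$, the floor analysis yielding $\mu=\Theta(M)$ with $\mu\le M/\ell!$, and Chebyshev. The differences are cosmetic: you use only one fresh coordinate $y_j\notin\mathbf{x}$ where the paper uses one from each side of the symmetric difference, and you cover both tails with a single two-sided Chebyshev bound via $M-\mu\ge 23\mu$, whereas the paper bounds the upper tail $|Y_H|>M$ separately and takes a union bound.
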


\begin{proof}
Express the size of the set $Y_H$, which is defined in \cref{def:task}, as a
sum of indicator functions.
\begin{equation*}
  \begin{aligned}
    |Y_H| =\sum_{A\in\binom{[M]}\ell}
    \mathds1\left[\sum_{x\in A}H(x)=0\pmod N\right].
  \end{aligned}
\end{equation*}
For each $A\in\binom{[M]}\ell$, condition on $\ell-1$ of the $\ell$ values
$\{H(x):x\in A\}$.
Exactly one value of the remaining uniform variable makes the sum zero
modulo $N$.
Therefore,
\begin{equation*}
  \begin{aligned}
    \E_H \mathds1\left[\sum_{x\in A}H(x)=0\pmod N\right]
    = \Prb_H\left[\sum_{x\in A}H(x)=0\pmod N\right]
    = \frac1N.
  \end{aligned}
\end{equation*}
Thus the expected size is $\E_H|Y_H|=\binom M\ell/N$.

For distinct $A,B\in\binom{[M]}\ell$, choose $a\in A\setminus B$ and
$b\in B\setminus A$.
Condition on every $H(x)$ except $H(a)$ and $H(b)$.
The zero-sum equations for $A$ and $B$ then determine $H(a)$ and $H(b)$,
respectively.
Hence
\begin{equation*}
  \Prb_H\left[
    \sum_{x\in A}H(x)=0\pmod N
    \ \text{and}\
    \sum_{x\in B}H(x)=0\pmod N
  \right]
  =\frac1{N^2}.
\end{equation*}
Thus, the covariance terms for distinct $A$ and $B$ vanish.
It follows that
\begin{equation*}
  \begin{aligned}
    \operatorname{Var}_H(|Y_H|)
    &=\sum_{A\in\binom{[M]}\ell}
      \operatorname{Var}_H\left(
        \mathds1\left[\sum_{x\in A}H(x)=0\pmod N\right]
      \right)\\
    &=\binom M\ell\frac1N\left(1-\frac1N\right)
    \le\binom M\ell\frac 1N.
  \end{aligned}
\end{equation*}
Define $\mu:=\E_H|Y_H|=\binom M\ell \frac 1N$.
The identity $M=\lfloor N^{\frac{1}{\ell-1}}\rfloor$ implies
$M^{\ell-1}\le N<{(M+1)}^{\ell-1}$.
Thus, $N=\Theta(M^{\ell-1})$.
Since $\ell$ is fixed, $\binom M\ell=\Theta(M^\ell)$, so
$\mu=\Theta(M)$.
Moreover, $N\ge M^{\ell-1}$ and
$\binom M\ell\le M^\ell/\ell!$, which give $\mu\le M/\ell!$.
Thus, Chebyshev's inequality and the variance bound above give
\begin{align*}
  \Prb_H[|Y_H|<\mu/2]
  &\le\frac{4\operatorname{Var}_H(|Y_H|)}{\mu^2}\\
  &\le\frac4\mu\\
  &=O(M^{-1}).
\end{align*}
For the upper tail, the bound on $\mu$ gives
$M-\mu\ge(1-1/\ell!)M$.
Applying Chebyshev's inequality again yields
\begin{align*}
  \Prb_H[|Y_H|>M]
  &\le
  \frac{\operatorname{Var}_H(|Y_H|)}
       {{(M-\mu)}^2}\\
  &\le
  {(1-1/\ell!)}^{-2}\frac{\mu}{M^2}\\
  &=O(M^{-1}).
\end{align*}
A union bound now proves the claim.

\end{proof}

\subsection{Compressed Oracle and the Computational Model}

In this section, we explain how queries in the standard quantum random-oracle
model (QROM) can be ``recorded'' using Zhandry's compressed-oracle
technique~\cite{Zha19}.
We focus on the representation needed for
\textsc{$\ell$-Sum-Filtered Parity}.

We use the following four registers.
\begin{itemize}
    \item $\Qreg$ contains the query index and answer/target registers.
    \item $\Wreg$ contains all working registers except for $\Qreg$.
\end{itemize}
Together, $\Wreg$ and $\Qreg$ are working registers used by the algorithm.
They are initialized as $\ket 0$ before the computation, and the joint
register $\Wreg\Qreg$ contains at most $S$ qubits.
The two remaining registers, while not considered as working registers, play
an important role in compressed-oracle analysis.
\begin{itemize}
    \item $\Hdbreg$ is the database register that stores the random oracle
    $H$ in the compressed oracle view.
    \item $\Gdbreg$ is the database register that stores the random oracle
    $G$ in the compressed oracle view.
\end{itemize}
The compressed standard oracle can be viewed as a type of lazy sampling
technique.
Instead of designating a particular $H$ or $G$ at the very beginning, the
compressed oracle maintains databases on $\Hdbreg, \Gdbreg$ in superposition.
Oracle queries are hence interactions between working registers and database
registers.
Informally speaking, $\Hdbreg, \Gdbreg$ would be initialized in uniform
superpositions of the states defined as follows,
\begin{equation*}
  \ket{H}_{\Hdbreg} :=
  \bigotimes_{j\in [M]}\ket{H(j)}_{\Hdbreg_j}, \quad \ket{G}_{\Gdbreg}
  :=\bigotimes_{\mathbf{j}\in{[M]}^{\ell}}
  \ket{G(\mathbf{j})}_{\Gdbreg_{\mathbf{j}}}.
\end{equation*}

\paragraph{Standard oracle.}

Registers $\Hdbreg$ and $\Gdbreg$ are initialized to
$\sum_H\ket{H}_{\Hdbreg}$ and $\sum_G\ket{G}_{\Gdbreg}$, respectively,
ignoring the normalizing factors.
An algorithm accesses the oracle $H$ coherently by applying the unitary
$O_H^{\mathrm{std}}$, which acts nontrivially only on $\Qreg$ and $\Hdbreg$.
Specifically,
\begin{equation*}
  O_H^{\mathrm{std}} \ket{x,z}_\Qreg\ket{H}_\Hdbreg :=
  \ket{x,z\oplus H(x)}_\Qreg\ket{H}_\Hdbreg.
\end{equation*}
The same discussion applies to $G$, except that $G$'s domain and codomain have
different sizes.
Note that an $H$-query requires $O(\log M + \log N) = O(\log N)$ qubits on
$\Qreg$, while a $G$-query requires
$O(\ell\log M + 1) = O(\log N)$ qubits because $\ell$ is fixed.
Hence they both can be included in an $O(\log N)$ query workspace.
We therefore slightly abuse notation and use $\Qreg$ for the query register
in both cases.
Then
\begin{equation*}
  O_G^{\mathrm{std}} \ket{\mathbf{x},b}_{\Qreg}\ket{G}_\Gdbreg :=
  \ket{\mathbf x,b\oplus G(\mathbf x)}_{\Qreg}\ket{G}_\Gdbreg.
\end{equation*}
This is the standard form of an oracle query, and Zhandry~\cite{Zha19} showed
that it can also be viewed as the following.

\paragraph{Compressed oracle.}
Write $\ket{\widehat 0}:=(\ket 0+\ket 1)/\sqrt 2$ and
$\ket{\widehat 1}:=(\ket 0-\ket 1)/\sqrt 2$ for the Fourier-basis states
of the answer qubit.
In this basis, an $H$-query acts non-trivially only on the registers $\Qreg$
and $\Hdbreg$, as follows:
\begin{equation*}
  O_H\ket{ x,\widehat z}_{\Qreg}\ket{H}_\Hdbreg
  ={(-1)}^{z\cdot H( x)} \ket{ x,\widehat z}_{\Qreg}\ket{H}_\Hdbreg.
\end{equation*}
Unlike the standard XOR queries, the Fourier-basis query changes a phase, and
hence also ``acts on'' the database registers.
Moreover, if we encode the database registers $\Hdbreg$ and $\Gdbreg$ in the
basis $\ket{\widehat 0},\ket{\widehat 1}$, they encode information about the
query more clearly.

Let $\mathbf{0} = {00\cdots 0}$ be the all-zero Boolean string.
Initially, ignoring the normalizing factor, the $\Hdbreg$ register is
$\sum_H\ket{H}_\Hdbreg=\ket{\widehat{\mathbf 0}}_\Hdbreg$.
In this representation, an $H$-phase-query acts as follows:
\begin{equation*}
  \begin{aligned}
    &O_H\ket{x, z}_{\Qreg}\otimes
      \ket{\widehat{s}}_{\Hdbreg_x}\ket{\widehat S}_{\Hdbreg_{-x}}\\
    &=
      \ket{x, z}_{\Qreg}\otimes
      \ket{\widehat{s\oplus z}}_{\Hdbreg_x}
      \ket{\widehat S}_{\Hdbreg_{-x}}.
  \end{aligned}
\end{equation*}
Here $\Hdbreg_x$ is the database cell register at position $x\in [M]$, while
$\Hdbreg_{-x}$ contains the rest.

A similar argument applies to $G$.
\begin{equation}
  \begin{aligned}
    &O_G\ket{\mathbf x, b}_{\Qreg}\otimes
      \ket{\widehat{p}}_{\Gdbreg_{\mathbf x}}
      \ket{\widehat P}_{\Gdbreg_{-\mathbf x}}\\
    &=
      \ket{\mathbf x, b}_{\Qreg}\otimes
      \ket{\widehat{p\oplus b}}_{\Gdbreg_{\mathbf x}}
      \ket{\widehat P}_{\Gdbreg_{-\mathbf x}}.
  \end{aligned}
\end{equation}
\begin{theorem}[{\cite{Zha19}} and
  {\cite[Lemma 4.9 in the full version]{HHL26}}]
Let $\calA$ be an (unbounded) quantum algorithm making oracle queries.
The output of $\calA$ given access to the compressed oracles $O_H, O_G$ is identical to
the output of $\calA$ given access to the standard oracles
$O_H^{\mathrm{std}}, O_G^{\mathrm{std}}$.
\end{theorem}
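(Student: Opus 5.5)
The plan is to show that the compressed-oracle view and the standard-oracle view are related by a fixed unitary that acts only on the database registers $\Hdbreg$ and $\Gdbreg$, and that this unitary commutes with everything the algorithm does. Since the algorithm never touches $\Hdbreg,\Gdbreg$, applying it or not cannot change the reduced state on $\Wreg\Qreg$, and hence cannot change the output distribution. I will treat $H$ and $G$ separately. The argument is identical for both, so I describe it for $H$.

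\textbf{Step 1: the phase oracle.} Write $V$ for the Hadamard transform on the answer qubits of $\Qreg$. The phase oracle is $O_H = V\,O_H^{\mathrm{std}}\,V^\dagger$, because $H^{\otimes}$ conjugation turns an XOR into a phase. Inserting $V$ and $V^\dagger$ around every query changes the algorithm only by local unitaries on $\Qreg$. These can be absorbed into the algorithm's own gates, so working with $O_H$ in the Fourier basis of $\Qreg$ loses nothing.

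\textbf{Step 2: the database picture.} Let $\mathcal{F}$ be the Hadamard (Fourier) transform applied bitwise to every cell $\Hdbreg_x$. Since $\sum_H \ket{H} = \ket{\widehat{\mathbf 0}}$, the initial database state is $\mathcal{F}$ applied to the computational $\ket{\mathbf 0}$. I then check on basis states that the phase $(-1)^{z\cdot H(x)}$ is exactly the map $\ket{\widehat s}_{\Hdbreg_x}\mapsto\ket{\widehat{s\oplus z}}_{\Hdbreg_x}$. This holds because $\sum_H(-1)^{s\cdot H(x)}(-1)^{z\cdot H(x)}\ket{H}$ equals the Fourier state indexed by $s\oplus z$. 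So the displayed compressed action is literally the same operator $O_H$, written in a different basis.

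\textbf{Step 3: comparing the two runs.} The final joint state is $U_T O\cdots O\,U_0\ket{0}_{\Wreg\Qreg}\otimes\ket{\mathrm{init}}_{\Hdbreg\Gdbreg}$, where the $U_i$ are the algorithm's unitaries and $O$ denotes the oracle calls. Both descriptions produce the identical vector; they differ only in the basis used to write the database. The measured output is a partial trace over $\Hdbreg\Gdbreg$, which is basis-independent, so the output distributions coincide.

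If ``compressed'' instead refers to Zhandry's version with $\bot$ symbols, there is one further step. Zhandry's compression unitary acts only on the database, and it is applied after the initial uniform superposition and around each query. Because it acts only on $\Hdbreg\Gdbreg$, it again leaves the reduced state on $\Wreg\Qreg$ unchanged, and I would invoke \cite{Zha19} for its unitarity.

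\textbf{Main obstacle.} I expect the only delicate point to be the bookkeeping of normalizations and of the Fourier convention for multi-bit codomains. For $G$ with one-bit output this is trivial. For $H$ with $[N]=\{0,1\}^n$, I would use the bitwise Hadamard over $\mathbb{Z}_2^n$, which matches the XOR oracle.
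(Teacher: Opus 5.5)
Your proposal is correct and is the standard argument behind the results the paper cites; the paper gives no proof of its own. The compressed/phase picture differs from the standard one only by Hadamards on the answer register, which you absorb into the algorithm, and by a Fourier (or Zhandry compression) isometry on the database registers, which commutes with every operation on $\Wreg\Qreg$ and so leaves the reduced state and output distribution unchanged; the same commutation also covers the paper's general inter-query channels $\Lambda_i^{(H)}$, so restricting to unitaries $U_i$ costs nothing.
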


\section{Classical Upper Bound}\label{sec:classical}

In this section, we give a classical upper bound for
\textsc{$\ell$-Sum-Filtered Parity} using the following algorithm.
\begin{enumerate}[label=(\arabic*)]
\item Query and store $H(x)$ for every $x\in[M]$.
\item Enumerate the $\ell$-tuples in $Y_H$, query the corresponding
$G$-bits, and compute their parity, stopping with output $0$ if more
than $M$ such tuples are found.
\end{enumerate}
This requires $O(M\log N)$ space and gives a worst-case
$O(M)$ query bound.
By \cref{lem:size-of-y}, the cutoff changes the output only with
probability $O(M^{-1})$.

\begin{algorithm}[H]
\caption{Classical algorithm for
  \textsc{$\ell$-Sum-Filtered Parity}}\label{alg:classical}
\begin{algorithmic}[1]
\State{} Query and store $H(x)$ for every $x\in[M]$.
\State{} $\mathit{answer}\gets 0$ and $\mathit{count}\gets 0$.
\ForAll{$x_1<\cdots<x_\ell$ in $[M]$}
  \If{$\sum_{i=1}^\ell H(x_i)=0\pmod N$}
    \If{$\mathit{count}=M$}
      \State{} \Return{} $0$
    \EndIf{}
    \State{} Query $G(x_1,\ldots,x_\ell)$ and XOR it into
      $\mathit{answer}$.
    \State{} $\mathit{count}\gets\mathit{count}+1$.
  \EndIf{}
\EndFor{}
\State{} \Return{} $\mathit{answer}$
\end{algorithmic}
\end{algorithm}

\begin{proposition}\label{prop:classical}
Fix a constant integer $\ell\ge4$.
There is a deterministic classical algorithm that outputs $F(H,G)$ with
probability $1-O(M^{-1})$, using $M$ queries to $H$ and at most
$M$ queries to $G$ with $O(M\log N)$-bit space.
\end{proposition}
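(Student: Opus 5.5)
We analyze \cref{alg:classical} directly and check correctness, query count, and space in turn.

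\emph{Queries.} The first step makes exactly $M$ queries to $H$, one per $x\in[M]$. In the loop, a $G$-query is made only after the counter check succeeds, that is, only while $\mathit{count}<M$. Each $G$-query increments $\mathit{count}$, so at most $M$ queries to $G$ are made on every input. This bound is worst-case, which makes the algorithm deterministic with a fixed query budget.

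\emph{Correctness.} Consider inputs with $|Y_H|\le M$. The loop enumerates each tuple $x_1<\cdots<x_\ell$ exactly once and tests membership in $Y_H$ using the stored $H$-values. When the $(M+1)$-st element of $Y_H$ would be reached, $\mathit{count}=M$ and the algorithm returns $0$; if $|Y_H|\le M$, this point is never reached. So every element of $Y_H$ is queried exactly once, each $G$-bit is XORed into $\mathit{answer}$, and the output is $\bigoplus_{x\in Y_H}G(x)=F(H,G)$, including the empty-set case where the output is $0$. By \cref{lem:size-of-y}, $\Prb_H[|Y_H|>M]=O(M^{-1})$. Since $G$ is independent of $H$, the algorithm is correct with probability at least $1-O(M^{-1})$ over $(H,G)$. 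The lower-tail part of the lemma is not needed here.

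\emph{Space.} The stored table of $H$ takes $M$ entries of $\log N=n$ bits each, so $O(M\log N)$ bits. The remaining state is
\begin{itemize}
  \item the loop variables $x_1,\ldots,x_\ell$, taking $O(\ell\log M)=O(\log N)$ bits since $\ell$ is fixed;
  \item the running sum modulo $N$, taking $O(\log N)$ bits;
  \item $\mathit{count}\le M$, taking $O(\log M)$ bits;
  \item $\mathit{answer}$, taking one bit.
\end{itemize}
The total is $O(M\log N)$ bits.

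The only step that needs any care is the interaction between the cutoff and correctness, namely that returning early changes the output only on the event $|Y_H|>M$. This is handled by the upper-tail bound of \cref{lem:size-of-y}; everything else is bookkeeping.
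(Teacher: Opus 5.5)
Your proof is correct and matches the paper's argument: you analyze \cref{alg:classical} directly, note that the cutoff can change the output only on the event $|Y_H|>M$, bound that event with the upper tail of \cref{lem:size-of-y}, and count queries and space the same way the paper does. One small point: the appeal to the independence of $G$ and $H$ is unnecessary, because the output equals $F(H,G)$ for every $G$ whenever $|Y_H|\le M$.
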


\begin{proof}
If $|Y_H|\le M$, \cref{alg:classical} queries every $G$-bit
appearing in $F(H,G)$ exactly once and returns their parity.
Thus, it can fail only when $|Y_H|>M$, which has probability
$O(M^{-1})$ by \cref{lem:size-of-y}.
The algorithm makes $M$ queries to $H$ and at most $M$ queries to $G$.
The stored $H$-table uses $M\log N$ bits; the loop indices, counter, query
registers, and output bit use $O(\log N)$ additional space.
The algorithm therefore has space complexity $O(M\log N)$.
\end{proof}

\section{Parity Forces Recording Capacity}
\label{sec:capacity-lower}

Fix a quantum algorithm $\mathcal A$.
We show that any advantage in predicting the parity on the $H$-selected domain,
i.e., solving \textsc{$\ell$-Sum-Filtered Parity}, forces a large expected
recording weight on the database register $\Gdbreg$ for the random oracle $G$.

\subsection{Introducing the capacity.}

We first specify what ``large expected recording weight'' means.
Let $\calA$ be the algorithm that aims to solve $\ell$-Sum-Filtered Parity.
For fixed $H$, we can omit $\Hdbreg$ as it stores unentangled information
about $H$.
During the computation, $\calA$ queries $G$ by applying $O_G$, and the
inter-query operations may depend on $H$.
Let $q\le T$ be the number of $G$-queries, and let
$\Lambda_0^{(H)},\ldots,\Lambda_q^{(H)}$ be the corresponding inter-query
quantum channels on $\Wreg\Qreg$.
Each $\Lambda_i^{(H)}$ is implicitly extended by the identity on
$\Gdbreg$.
Finally, $\calA$ performs a two-outcome POVM on the working registers
$\Wreg\Qreg$.
Thus, immediately before the final measurement, the joint state is
\begin{equation}
  \label{eq:recording-final-state}
  \begin{aligned}
    \rho_{H,0}
    &:= \Lambda_0^{(H)}
    \left(
      \proj{0}_{\Wreg\Qreg}
      \otimes\proj{\widehat 0}_{\Gdbreg}
    \right),\\
    \rho_{H,i+1}
    &:= \Lambda_{i+1}^{(H)}
    \left(
      O_G\rho_{H,i}O_G^\dagger
    \right),
    \qquad 0\le i<q,\\
    \rho_H&:=\rho_{H,q}.
  \end{aligned}
\end{equation}
Here and below, identities on omitted registers are implicit:
$\Lambda_i^{(H)}$ acts on $\Wreg\Qreg$, whereas $O_G$ acts on
$\Qreg\Gdbreg$.

For $Y\subseteq{[M]}^\ell$, define the parity-flip operator
\begin{equation}
  \label{eq:parity-flip-operator}
  X_Y:=\prod_{\mathbf x\in Y}X_{\Gdbreg_{\mathbf x}},
\end{equation}
and the recording-weight observable
\begin{equation}
  \label{eq:recording-observable}
  W_Y:=\sum_{\mathbf x\in Y}\proj{\hat 1}_{\Gdbreg_{\mathbf x}}.
\end{equation}

We next define the recording capacity that connects the algorithm's success
bias to the HHL capacity bound.

\begin{definition}[Selected recording capacity]
\label{def:selected-capacity}
First, fix an algorithm $\calA$ and a choice of $H\colon[M]\to[N]$.
Define
\begin{equation*}
  K_H(\calA)
  :=
  \operatorname{Tr}\!\left[
    \rho_H
    \bigl(I_{\Wreg\Qreg}\otimes W_{Y_H}\bigr)
  \right],
\end{equation*}
and define the selected recording capacity of $\calA$ as
$K(\calA):=\E_H[K_H(\calA)]$,
where the expectation is over a uniformly random $H$.
\end{definition}

\subsection{The success probability as an observable value}

\begin{lemma}
\label{lem:parity-correlation}
Represent the final binary POVM of $\calA$ by a Hermitian operator $B$ on
$\Wreg\Qreg$ with $\lVert B\rVert\le1$. 
Define the conditional success bias as
$\beta_H:=2\Prb_{G,\mathcal A}[\mathcal{A}^{H,G}=F(H,G)]-1$.
Then
\begin{equation*}
  \beta_H =
  \operatorname{Tr}\!\left[
    \rho_H\bigl(B\otimes X_{Y_H}\bigr)
  \right].
\end{equation*}
\end{lemma}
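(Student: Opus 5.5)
Fix $H$ throughout, so that $Y:=Y_H$ is a fixed set. Write the final POVM as $\{P_0,P_1\}$ on $\Wreg\Qreg$ with $B:=P_0-P_1$, so that $\lVert B\rVert\le1$.

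The plan is to express $\rho_H$ in the standard-oracle picture as an average over $G$, and then to see how $X_Y$ acts on the $\Gdbreg$ register in that picture. By the theorem of Zhandry and HHL stated above, running $\calA$ with the compressed phase oracle $O_G$ and initial database $\ket{\widehat{\mathbf 0}}_\Gdbreg$ produces the same joint state as the standard purification. Concretely, $\Gdbreg$ starts in $\ket{\widehat{\mathbf 0}}=2^{-|[M]^\ell|/2}\sum_G\ket G$, and each query acts controlled on $\ket G$. The only subtlety is basis bookkeeping: the Fourier-basis description of $O_G$ above is exactly the conjugation of the standard query by Hadamards on the answer qubit, and the final measurement acts only on $\Wreg\Qreg$. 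So I will work directly with
\[
\rho_H=\frac{1}{2^{2D}}\sum_{G,G'}\sigma_{G,G'}\otimes\ketbra{G}{G'}_\Gdbreg ,
\]
where $D:=|[M]^\ell|$ is the number of database cells, so that $2^{-2D}$ is the square of the initial amplitude. The diagonal blocks $\sigma_{G,G}=:\sigma_G$ are the final working-register states of $\calA^{H,G}$.

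The key step is that $X_{\Gdbreg_{\mathbf x}}$ is diagonal in the computational basis of $\Gdbreg_{\mathbf x}$. Indeed, $X=\ketbra{\widehat0}{\widehat0}-\ketbra{\widehat1}{\widehat1}$, so $X$ is the $Z$-type operator in the Fourier basis. This matches the description of $O_G$, which XORs into $\widehat{p}$, so that the computational basis $\ket{G}$ is the conjugate basis. Hence $X_Y\ket G=(-1)^{\bigoplus_{\mathbf x\in Y}G(\mathbf x)}\ket G=(-1)^{F(H,G)}\ket G$. The step I regard as the main obstacle is pinning down this sign and basis convention, namely that the paper's $X$ on $\Gdbreg$ is diagonal on $\ket{G}$. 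I would verify it by checking it on a single cell.

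It then follows that
\[
\operatorname{Tr}[\rho_H(B\otimes X_Y)]=\frac1{2^{2D}}\sum_G(-1)^{F(H,G)}\operatorname{Tr}[\sigma_G B],
\]
since the off-diagonal $G\ne G'$ terms vanish under the trace over $\Gdbreg$. After normalization, this equals $\E_G[(-1)^{F(H,G)}\operatorname{Tr}(\sigma_G B)]$.

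Finally, $\operatorname{Tr}(\sigma_G B)=\Pr[\text{out}=0]-\Pr[\text{out}=1]$. Multiplying by $(-1)^{F}$ turns this into $\Pr[\text{correct}]-\Pr[\text{wrong}]=2\Pr[\calA^{H,G}=F(H,G)]-1$. Averaging over $G$ gives $\beta_H$.
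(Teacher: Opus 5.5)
Your overall route is the paper's. You expand the state in the truth-table basis $\ket{G}$, identify the diagonal blocks with the per-$G$ premeasurement states, evaluate a parity observable that is diagonal on $\ket{G}$ with eigenvalue $(-1)^{F(H,G)}$, and average over $G$. The gap is exactly at the step you flagged as the crux: the justification you give for it proves the opposite of what you need. Under your own identification $\ket{\widehat{\mathbf 0}}=2^{-D/2}\sum_G\ket{G}$, each cell has $\ket{\widehat 0}=(\ket0+\ket1)/\sqrt2$. The identity $X=\ketbra{\widehat0}{\widehat0}-\ketbra{\widehat1}{\widehat1}$ then says $X$ is diagonal on the hat states, so on truth-table states it acts as $\ket{g}\mapsto\ket{g\oplus1}$. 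With that operator $X_Y\ket{G}=\ket{G\oplus\mathds{1}_Y}$. The trace then picks up the off-diagonal blocks $\sigma_{G,G\oplus\mathds{1}_Y}$ rather than $(-1)^{F(H,G)}\sigma_G$, and the claimed equality is false. For example, take an algorithm that makes no $G$-queries and always outputs $0$. It has $\rho_H=\omega\otimes\proj{\widehat{\mathbf 0}}$ with $\operatorname{Tr}(B\omega)=1$, so the right-hand side equals $1$, while $\beta_H=0$ whenever $Y_H\neq\varnothing$. Checking ``on a single cell'', as you propose, would expose this rather than confirm it.

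What makes the lemma true is that $X_{\Gdbreg_{\mathbf x}}$ must be the flip of the recorded label, $\ket{\widehat p}\mapsto\ket{\widehat{p\oplus1}}$, i.e.\ $\ketbra{\widehat0}{\widehat1}+\ketbra{\widehat1}{\widehat0}$. This is also what the next lemma needs, via $X_YW_YX_Y=|Y|I-W_Y$ with $W_Y$ counting $\ket{\widehat1}$'s. On truth-table states this operator equals $\proj0-\proj1$, the phase $\ket{g}\mapsto(-1)^g\ket{g}$. The paper makes this explicit in two steps. It first computes $\beta_H=\operatorname{Tr}\bigl[\sigma_H\bigl(B\otimes\prod_{\mathbf x\in Y_H}Z_{\Gdbreg_{\mathbf x}}\bigr)\bigr]$ for the truth-table-basis state $\sigma_H$. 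It then applies the Hadamard transform on $\Gdbreg$, which carries $\sigma_H$ to $\rho_H$ and each $Z$ to $X$. Your write-up identifies $\rho_H$ with $\sigma_H$ and $X$ with $Z$ at the same time. The two identifications cancel to give the right formula, but the reason you state does not support it. Replace your ``key step'' by this Hadamard conjugation, or by the explicit single-cell identity $\ketbra{\widehat0}{\widehat1}+\ketbra{\widehat1}{\widehat0}=\proj0-\proj1$.

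Separately, the prefactor in your expansion should be $2^{-D}$, not $2^{-2D}$, since the initial amplitude is $2^{-D/2}$. With $2^{-2D}$ your final sum is off by a factor $2^{-D}$, which ``after normalization'' silently absorbs.
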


\begin{proof}
For each $G$, let $\rho_{H,G}$ be the
premeasurement density operator on $\Wreg\Qreg$ when the Boolean oracle is
$G$.
Before Fourier transforming the truth-table register $\Gdbreg$, let
$\sigma_H$ be the joint density operator.
Its diagonal blocks satisfy
\begin{equation*}
  \bra{G}_{\Gdbreg}\sigma_H\ket{G}_{\Gdbreg}
  =2^{-M^\ell}\rho_{H,G}.
\end{equation*}
For fixed $G$, the definition of $B$ gives
\begin{equation*}
  2\Prb_{\mathcal A} [\mathcal A^{H,G}=F(H,G)]-1
  = {(-1)}^{F(H,G)} \operatorname{Tr}(B\rho_{H,G}).
\end{equation*}
Because $G$ is uniform, averaging this identity over $G$ gives
\begin{equation*}
  \begin{aligned}
    \beta_H
    &=2^{-M^\ell} \sum_{G\colon{[M]}^\ell\to\{0,1\}}
      \left(2\Prb_{\mathcal A}[\mathcal{A}^{H,G}=F(H,G)]-1
      \right)\\
    &=2^{-M^\ell} \sum_{G\colon{[M]}^\ell\to\{0,1\}}
      {(-1)}^{F(H,G)}\operatorname{Tr}(B\rho_{H,G}).
  \end{aligned}
\end{equation*}
For each $\mathbf x$, let $Z_{\Gdbreg_{\mathbf x}}$ act on the
truth-table qubit indexed by $\mathbf x$.
Since $F(H,G)=\bigoplus_{\mathbf x\in Y_H}G(\mathbf x)$,
\begin{equation*}
  \left(\prod_{\mathbf x\in Y_H}Z_{\Gdbreg_{\mathbf x}}
  \right)
  \ket{G}_{\Gdbreg} ={(-1)}^{F(H,G)}\ket{G}_{\Gdbreg}.
\end{equation*}
Substituting the diagonal-block identity above and using the orthogonality
of the states $\ket{G}_{\Gdbreg}$ now yields
\begin{equation*}
  \begin{aligned}
    &\operatorname{Tr}\!\left[
      \sigma_H \left(
        B\otimes
        \prod_{\mathbf x\in Y_H}Z_{\Gdbreg_{\mathbf x}}
      \right)
    \right]\\
    &\quad=2^{-M^\ell}
      \sum_{G\colon{[M]}^\ell\to\{0,1\}}
      {(-1)}^{F(H,G)}\operatorname{Tr}(B\rho_{H,G})\\
    &\quad=\beta_H.
  \end{aligned}
\end{equation*}
Finally, the Hadamard transform changes $\sigma_H$ into $\rho_H$ and maps
each $Z_{\Gdbreg_{\mathbf x}}$ to
$X_{\Gdbreg_{\mathbf x}}$.
Hence, by \cref{eq:parity-flip-operator},
\begin{equation*}
  \beta_H
  =\operatorname{Tr}\!\left[
    \rho_H \left(
      B\otimes
      \prod_{\mathbf x\in Y_H}X_{\Gdbreg_{\mathbf x}}
    \right)
  \right]
  =\operatorname{Tr}\!\left[
    \rho_H(B\otimes X_{Y_H})
  \right].
\end{equation*}
\end{proof}

\subsection{An exact parity-to-capacity inequality}
\label{sec:parity-capacity-inequality}


\begin{lemma}
\label{lem:parity-to-capacity}
For every fixed $H$, $K_H(\calA)\ge |Y_H|\beta_H^2/4$.
\end{lemma}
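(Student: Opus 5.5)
The plan is to prove the bound one tuple at a time: for each fixed $\mathbf x\in Y_H$, the recording weight on the single cell $\Gdbreg_{\mathbf x}$ is at least $\beta_H^2/4$. Summing over $\mathbf x\in Y_H$ then gives $K_H(\calA)\ge|Y_H|\beta_H^2/4$, since $W_{Y_H}=\sum_{\mathbf x\in Y_H}\proj{\hat 1}_{\Gdbreg_{\mathbf x}}$. The per-cell bound follows from one structural fact. The parity observable $O:=B\otimes X_{Y_H}$ from \cref{lem:parity-correlation} \emph{anticommutes} with the reflection $U_{\mathbf x}:=\Pi^{\mathbf x}_0-\Pi^{\mathbf x}_1$. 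Here $\Pi^{\mathbf x}_0,\Pi^{\mathbf x}_1$ are the projectors onto ``not recorded'' and ``recorded'' at cell $\mathbf x$, and $\Pi^{\mathbf x}_1$ is exactly the summand of $W_{Y_H}$ at $\mathbf x$.

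\textbf{Step 1: verify the anticommutation.} I would check this in the truth-table picture used inside the proof of \cref{lem:parity-correlation}, where the state is $\sigma_H$ and the observable is $B\otimes\prod_{\mathbf y\in Y_H}Z_{\Gdbreg_{\mathbf y}}$. In that picture the recording projector at $\mathbf x$ is the projector onto $\ket{\widehat 1}$, namely $(I-X_{\Gdbreg_{\mathbf x}})/2$, so $U_{\mathbf x}=X_{\Gdbreg_{\mathbf x}}$.
\begin{itemize}
\item $X_{\Gdbreg_{\mathbf x}}$ anticommutes with $Z_{\Gdbreg_{\mathbf x}}$.
\item Since $\mathbf x\in Y_H$, it therefore anticommutes with the whole product $\prod_{\mathbf y\in Y_H}Z_{\Gdbreg_{\mathbf y}}$.
\item It commutes with $B$, which acts only on $\Wreg\Qreg$.
\end{itemize}
Conjugating by the Hadamard transform carries this to $\rho_H$ and $O$, with $\Pi^{\mathbf x}_1$ becoming the projector in $W_{Y_H}$. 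Equivalently, the relation is simply the statement that flipping $G(\mathbf x)$ flips $F(H,G)$ for $\mathbf x\in Y_H$. The main obstacle I expect is bookkeeping the Fourier conventions so that the projector in $W_{Y_H}$ really is the $(-1)$-eigenspace of the operator anticommuting with $O$; this is why I would work in the $\sigma_H$ picture.

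\textbf{Step 2: $O$ is block off-diagonal.} From $U_{\mathbf x}OU_{\mathbf x}=-O$ we get $\Pi^{\mathbf x}_0O\Pi^{\mathbf x}_0=\Pi^{\mathbf x}_1O\Pi^{\mathbf x}_1=0$. Hence $O=\Pi^{\mathbf x}_0O\Pi^{\mathbf x}_1+\Pi^{\mathbf x}_1O\Pi^{\mathbf x}_0$.

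\textbf{Step 3: Cauchy--Schwarz.} Purify $\rho_H$ to $\ket\psi$ with a reference register, and write $\ket\psi=\ket{\psi_0}+\ket{\psi_1}$ with $\ket{\psi_b}=\Pi^{\mathbf x}_b\ket\psi$. By \cref{lem:parity-correlation} and Step 2,
\begin{equation*}
  |\beta_H|=\bigl|2\operatorname{Re}\bra{\psi_0}O\ket{\psi_1}\bigr|
  \le 2\lVert O\rVert\,\lVert\psi_0\rVert\,\lVert\psi_1\rVert
  \le 2\lVert\psi_1\rVert ,
\end{equation*}
using $\lVert O\rVert\le\lVert B\rVert\le1$ and $\lVert\psi_0\rVert\le1$. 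Therefore
\begin{equation*}
  \operatorname{Tr}\!\left[\rho_H\bigl(I\otimes\proj{\hat 1}_{\Gdbreg_{\mathbf x}}\bigr)\right]=\lVert\psi_1\rVert^2\ge\beta_H^2/4 .
\end{equation*}

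\textbf{Step 4: sum over tuples.} Summing this per-cell bound over the $|Y_H|$ tuples in $Y_H$ and using linearity of trace yields the lemma.
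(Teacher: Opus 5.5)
Your argument is correct, and it takes a genuinely different, more local route than the paper.

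\textbf{The paper's route.} The paper treats the whole set at once. It pairs $O_2=B\otimes X_Y$ with the aggregate observable $O_1=I\otimes(I-\tfrac{2}{|Y|}W_Y)$. It derives $O_1O_2+O_2O_1=0$ from $X_YW_YX_Y=|Y|I-W_Y$. It then optimizes over $aO_1+bO_2$ to obtain the uncertainty-type relation $(1-2K/|Y|)^2+\beta^2\le 1$. This gives $K\ge|Y|(1-\sqrt{1-\beta^2})/2\ge|Y|\beta^2/4$.

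\textbf{How your route relates.} Since $O_1=\frac{1}{|Y|}\sum_{\mathbf x\in Y}U_{\mathbf x}$, your proof applies the same anticommutation principle to each reflection $U_{\mathbf x}$ separately. You replace the $O_{a,b}$ optimization with the elementary fact that an operator anticommuting with $U_{\mathbf x}$ is block off-diagonal.

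\textbf{What each route buys.} Yours yields a finer conclusion: every single tuple of $Y_H$ carries recording weight at least $\beta_H^2/4$, not merely the sum. Moreover, writing $p_{\mathbf x}=\lVert\psi_1\rVert^2$, you could keep $\lVert\psi_0\rVert^2=1-p_{\mathbf x}$ instead of bounding it by $1$. This gives $\beta_H^2\le 4p_{\mathbf x}(1-p_{\mathbf x})$ per cell, which after summing reproduces exactly the paper's sharper intermediate bound $K\ge|Y|(1-\sqrt{1-\beta^2})/2$. The paper's route, in turn, gets the aggregate statement in one step without decomposing over cells.

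\textbf{On the Fourier conventions.} Checking the anticommutation in the $\sigma_H$ frame is sound and is cleaner than the paper's presentation. Under the paper's literal definitions, $W_Y$ is built from $\proj{\hat 1}$, which Pauli $X$ fixes rather than flips. So the identity $X_YW_YX_Y=|Y|I-W_Y$ holds only once $W_Y$ is read in the Hadamard-transformed frame, with computational-basis $\proj{1}$. The paper itself uses $\proj{1}$ in the proof of \cref{thm:selected-capacity-upper-bound}. Your frame-by-frame bookkeeping makes this consistency explicit.
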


\begin{proof}
Since we are working with fixed $H$, write
$\rho:=\rho_H$, $Y:=Y_H$, $K:=K_H(\calA)$, and
$\beta:=\beta_H$.
If $Y=\varnothing$, then $K=0$ and the claim is immediate.
Assume that $Y\ne\varnothing$, and define the Hermitian operators
\begin{equation*}
  O_1 :=I_{\Wreg\Qreg}\otimes
    \left(I_{\Gdbreg}-\frac{2}{|Y|}W_Y\right),
  \qquad
  O_2:=B\otimes X_Y.
\end{equation*}
The spectrum of $W_Y$ is contained in $\{0,\ldots,|Y|\}$.
Together with $\lVert B\rVert\le1$ and $X_Y^2=I_{\Gdbreg}$, this gives
\begin{equation*}
  O_1^2\preceq I,
  \qquad
  O_2^2=B^2\otimes I_{\Gdbreg}\preceq I.
\end{equation*}

The operator $X_Y$ flips each recording qubit indexed by $Y$, so
$X_Y W_Y X_Y=|Y|I_{\Gdbreg}-W_Y$.
Consequently,
\begin{equation*}
  \begin{aligned}
    &X_Y \left(I_{\Gdbreg}-\frac{2}{|Y|}W_Y\right)
      X_Y\\
    &\quad=
      I_{\Gdbreg} -\frac{2}{|Y|}\left(|Y|I_{\Gdbreg}-W_Y\right)\\
    &\quad=-\left(I_{\Gdbreg}-\frac{2}{|Y|}W_Y\right).
  \end{aligned}
\end{equation*}
This identity gives $O_1O_2+O_2O_1=0$.

For $a,b\in\mathbb R$, set $O_{a,b}:=aO_1+bO_2$.
The preceding operator inequalities and the anticommutation relation imply
\begin{equation*}
  \begin{aligned}
    O_{a,b}^2 &=a^2O_1^2+b^2O_2^2
      +ab(O_1O_2+O_2O_1)\\
    &\preceq(a^2+b^2)I.
  \end{aligned}
\end{equation*}
By the definitions of $K$ and $\beta$,
\begin{equation*}
  \operatorname{Tr}(\rho O_1)=1-\frac{2K}{|Y|},
  \qquad
  \operatorname{Tr}(\rho O_2)=\beta.
\end{equation*}
Since $O_{a,b}$ is Hermitian, Cauchy--Schwarz gives
\begin{equation*}
  \begin{aligned}
    \left|a\left(1-\frac{2K}{|Y|}\right)+b\beta\right|^2
    &=\left|\operatorname{Tr}(\rho O_{a,b})\right|^2\\
    &\le\operatorname{Tr}(\rho O_{a,b}^2)\\
    &\le a^2+b^2.
  \end{aligned}
\end{equation*}
By Cauchy--Schwarz again,
\begin{equation*}
  \max_{a^2+b^2=1}
  \left|a\left(1-\frac{2K}{|Y|}\right)+b\beta\right|^2
  ={\left(1-\frac{2K}{|Y|}\right)}^2+\beta^2.
\end{equation*}
Combining this identity with the preceding bound yields
$(1-2K/|Y|)^2+\beta^2\le1$.
In particular, $K\ge |Y|(1-\sqrt{1-\beta^2})/2$.
Finally,
\begin{equation*}
  1-\sqrt{1-\beta^2}
  =\frac{\beta^2}{1+\sqrt{1-\beta^2}}
  \ge\frac{\beta^2}{2},
\end{equation*}
which proves the desired inequality.
\end{proof}


\subsection{Averaging over \texorpdfstring{$H$}{H}}
\label{sec:capacity-averaging}

Using \cref{def:selected-capacity}, we now combine the
fixed-$H$ inequality with the concentration of $Y_H$ to lower-bound
$K(\mathcal A)$.

\begin{theorem}
\label{thm:success-forces-capacity}
Fix $0<\varepsilon\le1/2$, and let $\mathcal A$ be a quantum algorithm for
\textnormal{\textsc{$\ell$-Sum-Filtered Parity}}.
Let $K(\mathcal A)$ be its selected recording capacity as defined in
\cref{def:selected-capacity}.
If $\mathcal A$ outputs $F(H,G)$ with probability at least
$1/2+\varepsilon$, then, for all sufficiently large $N$,
\begin{equation*}
  K(\mathcal A)
  \ge \binom{M}{\ell}\frac{\varepsilon^2}{8N}
  =\Omega(\varepsilon^2M).
\end{equation*}
\end{theorem}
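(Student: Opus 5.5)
The plan is to average the fixed-$H$ inequality of \cref{lem:parity-to-capacity} over $H$, restricted to the event $E:=\{|Y_H|\ge \mu/2\}$, where $\mu=\binom{M}{\ell}/N$. On $E$ the lemma gives $K_H(\calA)\ge \mu\beta_H^2/8$, and off $E$ we only use $K_H(\calA)\ge0$. Hence
\begin{equation*}
  K(\calA)=\E_H[K_H(\calA)]\ge \frac{\mu}{8}\,\E_H\bigl[\beta_H^2\mathds1[E]\bigr].
\end{equation*}
It then suffices to show $\E_H[\beta_H^2\mathds1[E]]\ge\varepsilon^2$ for large $N$. (The constant can be adjusted slightly if the exact $1/8$ is needed; see below.)

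To lower-bound this quantity, I use the success hypothesis. Since $\mathcal A$ succeeds with probability at least $1/2+\varepsilon$ over $H$, $G$ and its internal randomness, we have $\E_H[\beta_H]\ge 2\varepsilon$, because $\beta_H=2\Prb_{G,\mathcal A}[\cdot]-1$. Each $\beta_H$ lies in $[-1,1]$, and by the lower-tail bound in \cref{lem:size-of-y}, $\Prb_H[\neg E]=O(M^{-1})$. Therefore
\begin{equation*}
  \E_H[\beta_H\mathds1[E]]\ge 2\varepsilon-O(M^{-1}).
\end{equation*}
Jensen (Cauchy--Schwarz) then gives
\begin{equation*}
  \E_H[\beta_H^2\mathds1[E]]\ge \bigl(\E_H[\beta_H\mathds1[E]]\bigr)^2\ge (2\varepsilon-O(M^{-1}))^2\ge 2\varepsilon^2
\end{equation*}
once $M$ is large enough relative to the constant $\varepsilon$. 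This yields $K(\calA)\ge \mu\cdot 2\varepsilon^2/8\ge \binom{M}{\ell}\varepsilon^2/(8N)$.

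Finally, since $\mu=\Theta(M)$ (shown in the proof of \cref{lem:size-of-y} from $N=\Theta(M^{\ell-1})$), the bound is $\Omega(\varepsilon^2M)$.

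The main point needing care is the constant bookkeeping. Losing a factor $2$ from restricting to $|Y_H|\ge\mu/2$ is compensated by the slack $(2\varepsilon)^2=4\varepsilon^2$ against the target $\varepsilon^2$. I must make sure the $O(M^{-1})$ error term is absorbed, which uses that $\varepsilon$ is a fixed constant and $N$ (hence $M$) is sufficiently large. I should also confirm that the lemma's $\beta_H$ is defined with probability over both $G$ and the algorithm's randomness, so that averaging over $H$ recovers the overall success probability. This holds since $\rho_H$ is the mixed state that already includes the averaging over $G$ and the internal randomness.
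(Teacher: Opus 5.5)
Your proposal is correct and takes the same route as the paper. You restrict to the event $|Y_H|\ge\mu/2$, bound $\E_H[\beta_H\mathds1[E]]\ge 2\varepsilon-\Prb_H[\neg E]$ using $\beta_H\le1$, apply Cauchy--Schwarz to pass to $\E_H[\beta_H^2\mathds1[E]]$, and combine with \cref{lem:parity-to-capacity}. The only difference is bookkeeping: the paper first reduces to $\E_H[\mathds1_{\mathcal E}\beta_H]\ge\varepsilon$ and obtains exactly $\mu\varepsilon^2/8$, whereas you keep the slack and in fact get the slightly stronger bound $\mu\varepsilon^2/4$.
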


\begin{proof}
For each $H$, let $\beta_H$ be the conditional success bias from
\cref{lem:parity-correlation}.
Set
\begin{equation*}
  \mu:=\E_H[|Y_H|]=\binom{M}{\ell}\frac{1}{N},
  \qquad
  \mathcal E:=\{H:|Y_H|\ge\mu/2\}.
\end{equation*}
By \cref{lem:size-of-y}, $\Prb_H[\neg\mathcal E]=O(M^{-1})$.
Since $\varepsilon>0$ is fixed, this probability is at most $\varepsilon$
for all sufficiently large $N$.

The success assumption gives
\begin{equation*}
  \E_H[\beta_H] =2\Prb_{H,G,\mathcal A}
    \bigl[\mathcal{A}^{H,G}=F(H,G)\bigr]-1
  \ge2\varepsilon.
\end{equation*}
Since $\beta_H\le1$ for every $H$,
\begin{equation*}
  \begin{aligned}
    \E_H[\mathds{1}_{\mathcal E}\beta_H]
    &=\E_H[\beta_H]
      -\E_H[\mathds{1}_{\neg\mathcal E}\beta_H]\\
    &\ge2\varepsilon-\Prb_H[\neg\mathcal E]\\
    &\ge\varepsilon.
  \end{aligned}
\end{equation*}
Cauchy--Schwarz therefore gives
\begin{equation*}
  \begin{aligned}
    \varepsilon^2
    &\le{\left(\E_H[\mathds{1}_{\mathcal E}\beta_H]\right)}^2\\
    &\le\Prb_H[\mathcal E]\,
      \E_H[\mathds{1}_{\mathcal E}\beta_H^2]\\
    &\le\E_H[\mathds{1}_{\mathcal E}\beta_H^2].
  \end{aligned}
\end{equation*}

\cref{lem:parity-to-capacity} gives
$K_H\ge |Y_H|\beta_H^2/4$.
Using \cref{def:selected-capacity} and restricting the
expectation to $\mathcal E$, we obtain
\begin{equation*}
  \begin{aligned}
    K(\mathcal A)
    &=\E_H[K_H]\\
    &\ge\frac14\E_H[|Y_H|\beta_H^2]\\
    &\ge\frac{\mu}{8}\E_H[\mathds{1}_{\mathcal E}\beta_H^2]\\
    &\ge\frac{\mu\varepsilon^2}{8}.
  \end{aligned}
\end{equation*}
Substituting $\mu=\binom{M}{\ell}/N$ proves the explicit bound.
Since $M=\lfloor N^{\frac{1}{\ell-1}}\rfloor$, we also have
$\mu=\Theta(M)$, which proves the asymptotic bound.
\end{proof}

\section{Capacity Upper Bound from HHL}

We show that a bounded-space, bounded-query algorithm cannot produce large
capacity $K(\calA)$.

\begin{theorem}[Selected-capacity upper bound]
\label{thm:selected-capacity-upper-bound}
Fix a constant integer $\ell\ge 4$, and let $\calA$ be a uniform quantum
algorithm using at most $S$ qubits and $T$ oracle queries.
If $S=\Omega(\log N)$ and
  $S=O\!\left(
    N^{1/((\ell+1)(2\ell+1))}
  \right)$,
then
\begin{equation*}
  K(\calA)
  =O\!\left(
    S^{2\ell+2}T^2N^{-2/(\ell+1)}
  \right).
\end{equation*}
In particular, for $\ell=4$,
$K(\calA)=O\!\left(S^{10}T^2N^{-2/5}\right)$ whenever
$S=\Omega(\log N)$ and $S=O(N^{1/45})$.
\end{theorem}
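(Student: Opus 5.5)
This bound should be a direct import of the HHL capacity bound, translated into the present parameters. The plan has three steps.

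\emph{Step 1: translate.} Hao, Huang, and Liu~\cite{HHL26} define, for their two-oracle nested-collision setting, a recording quantity. It is the expected number of database cells of $G$ indexed by $H$-good tuples, that is, tuples in $Y_H$, that are nonzero in the Fourier/compressed basis. For $\ell$-sum, this good set is exactly our $Y_H$. I would first check that their quantity coincides with $\E_H\operatorname{Tr}[\rho_H(I\otimes W_{Y_H})]$. A Boolean $G$ with the single-qubit cell encoding $\ket{\widehat 0}$/$\ket{\widehat 1}$ is a special case of their codomain $[N_0]$, where a cell is recorded iff it is not in $\ket{\widehat 0}$. So $W_{Y_H}$ counts recorded selected cells. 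The check has two parts:
\begin{itemize}
\item the $H$-dependence of the inter-query channels $\Lambda_i^{(H)}$ is harmless, since HHL also treat $H$ as an oracle accessed via the compressed oracle on $\Hdbreg$, and our fixed-$H$ description is the conditional of that joint picture;
\item measuring $\Hdbreg$ in the standard basis commutes with the final observable $W_{Y_H}$, which is diagonal in $H$.
\end{itemize}

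\emph{Step 2: the HHL argument, in outline.} Divide the $T$ queries into $T/T'$ blocks of $T'$ queries each, with $T'$ chosen relative to $S$. A block starting from an $S$-qubit state can be simulated by a block starting from no advice, at a cost of a factor $2^{O(S)}$ in success, or, in their amplitude version, a factor tied to $S$. Within one block, the compressed-oracle progress measure for finding $\ell$-sum solutions gives a bound on the number of new good tuples recorded. Fresh $\ell$-sum solutions among $T'$ recorded $H$-values number about $T'^{\ell}/N$, and amplitude growth contributes roughly $(T'^{\ell+1}/N)^{1/2}$ per query, in the style of $k$-SUM time--space tradeoffs. Only $O(S)$ recorded tuples can survive the block boundary, since the state carries at most $S$ qubits of correlation. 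Summing over blocks and optimizing $T'$ yields $K=O(S^{2\ell+2}T^2N^{-2/(\ell+1)})$. The constraint $S=O(N^{1/((\ell+1)(2\ell+1))})$ is what keeps the optimal block length within the regime where the per-block bound is valid. The constraint $S=\Omega(\log N)$ absorbs the $O(\log N)$ query-register overhead.

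\emph{Step 3: specialize.} Substituting $\ell=4$ gives $S^{10}T^2N^{-2/5}$ with $S=O(N^{1/45})$.

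The main obstacle is Step 1: verifying that the hypotheses of the HHL capacity lemma hold verbatim. The points to check are that their good set is defined by $\ell$-sum zero rather than by collisions, that $G$ has Boolean rather than large range, and that the algorithm is uniform with an $S$-qubit workspace including $\Qreg$. I would first try to cite their general theorem stated for an arbitrary $k$-SUM-type filter. If it is stated only for collisions, I would redo their per-block lemma. This means replacing the collision-counting bound ``$\Pr[\text{new pair}]\lesssim t/N$'' with the $\ell$-sum analogue $\lesssim t^{\ell-1}/N$. The exponent bookkeeping then produces the $N^{-2/(\ell+1)}$ and $S^{2\ell+2}$ factors.
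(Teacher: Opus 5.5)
\paragraph{Overall route.} Your route matches the paper's: the bound is imported from the HHL capacity theorem with $N_0=2$ (so $n_0=1$), and the $\ell=4$ case is pure substitution. The paper does not re-derive the block argument of your Step 2, and you do not need to either. That sketch should not be offered as a proof. For instance, the claim that only $O(S)$ recorded tuples ``survive'' a block boundary misdescribes the compressed oracle: the $\Gdbreg$ register is not part of the $S$-qubit workspace and can hold many recorded cells.

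\paragraph{The gap: the set in HHL's capacity is not $Y_H$.} In Step 1 you assert that the good set in HHL's capacity is exactly $Y_H$, and plan to check that the two quantities coincide. That check would fail. For the constant target $y_H=0$, HHL's capacity is taken over the unrestricted ordered set
\[
\overline Y_H=\Bigl\{\mathbf x\in[M]^\ell:\textstyle\sum_{i=1}^{\ell}H(x_i)=0 \pmod N\Bigr\}.
\]
This set contains every permutation of each strictly increasing solution, and these index distinct cells of $G$ because $G$ is defined on all of $[M]^\ell$. It also contains tuples with repeated coordinates. So HHL bound a different, larger quantity $\overline K$, not $K$.

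\paragraph{The repair.} What you need is monotonicity rather than equality. Since $Y_H\subseteq\overline Y_H$,
\[
W_{\overline Y_H}-W_{Y_H}=\sum_{\mathbf x\in\overline Y_H\setminus Y_H}\proj{\widehat 1}_{\Gdbreg_{\mathbf x}}\succeq 0 .
\]
Tracing this against $\rho_H$ and averaging over $H$ gives $K(\calA)\le\overline K(\calA)$, so the HHL bound on $\overline K$ transfers directly. This is a one-line fix, but it is the paper's only argument beyond the citation. Your extra checks on the fixed-$H$ versus joint compressed-oracle picture are reasonable care. The paper avoids them by defining $\overline K$ in exactly the same fixed-$H$ form as $K$, with only $Y_H$ replaced by $\overline Y_H$.
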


\begin{proof}
We use the following one-bit specialization of HHL's capacity bound. 
For each $H$, define the unrestricted ordered zero-sum set
\begin{equation*}
  \overline Y_H :=\left\{
    \mathbf x=(x_1,\ldots,x_\ell)\in{[M]}^\ell:
    \sum_{i=1}^{\ell}H(x_i)=0\pmod N
  \right\}.
\end{equation*}
This is the set used by the HHL capacity for the constant target sequence
$\{y_H\}$ given by $y_H=0$ for every $H$.
Unlike $Y_H$, the set $\overline Y_H$ allows repeated coordinates and all
orderings.
The strict-ordering condition in \cref{def:task} gives
$Y_H\subseteq\overline Y_H$.
Let $\overline K$ be the recorded capacity obtained from $K$ in
\cref{def:selected-capacity} by replacing $Y_H$ with $\overline Y_H$.

\begin{theorem}[One-bit specialization of {\cite[Theorem 9.3 in the full
    version]{HHL26}}]
\label{thm:K-time-space-upperbound}
Fix a constant integer $\ell\ge4$, and let $H$ and $G$ be the two random
oracles in \textnormal{\textsc{$\ell$-Sum-Filtered Parity}}.
Let $\calA$ be a uniform quantum algorithm using $S$ qubits, at most
$T$ queries to $H$, and at most $T$ queries to $G$.
Then, we have\footnote{In~\cite{HHL26}'s original notation, $N_0=2$ and
  hence $n_0=\log N_0=1$, so their bound reduces to the special form.}
\begin{equation*}
  \overline K(\calA)
  =O\!\left(
    S^{2\ell+2}T^2N^{-2/(\ell+1)}
  \right)
\end{equation*}
for $S=\Omega(\log N)$ and
$S=O(N^{1/((\ell+1)(2\ell+1))})$.
\end{theorem}

For every $H$,
\begin{equation*}
  W_{\overline Y_H}-W_{Y_H}
  =\sum_{\mathbf x\in\overline Y_H\setminus Y_H}
    \proj{1}_{\Gdbreg_{\mathbf x}}
  \succeq0.
\end{equation*}
Taking the trace of this operator against $\rho_H$ and then averaging over
$H$ gives $K(\calA)\le\overline K(\calA)$.

Thus, \cref{thm:K-time-space-upperbound} gives
\begin{equation*}
  K(\calA)
  =O\!\left(
    S^{2\ell+2}T^2N^{-2/(\ell+1)}
  \right).
\end{equation*}
Substituting $\ell=4$ gives
$2\ell+2=10$, $2/(\ell+1)=2/5$, and
$(\ell+1)(2\ell+1)=45$, which proves the specialization.
\end{proof}

\section{Proof of the Separation}

\begin{proof}[Proof of \cref{thm:main}]
For part~(i), check \cref{prop:classical}.

For part~(ii), fix $0<\varepsilon\le 1/2$ and a quantum algorithm
$\mathcal A$ as in the theorem.
By \cref{lem:size-of-y} and \cref{thm:success-forces-capacity},
\begin{equation*}
  K(\mathcal A)
  \ge \binom{M}{\ell} \frac{\varepsilon^2}{8N}
  =\Omega\!\left(N^{\frac{1}{\ell-1}}\right).
\end{equation*}

By \cref{thm:selected-capacity-upper-bound},
\begin{equation*}
  K(\mathcal A)
  =O\!\left(
    S^{2\ell+2}T^2N^{-2/(\ell+1)}
  \right),
\end{equation*}
when $S = \Omega(\log N)$ and $S = O(N^{1/((\ell + 1)(2\ell + 1))})$.
Comparing these two bounds yields
\begin{equation*}
  S^{2\ell+2}T^2N^{-2/(\ell+1)}
  =\Omega\!\left(
    N^{\frac{1}{\ell-1}}
  \right).
\end{equation*}
Since $S$ and $T$ are nonnegative, taking square roots gives
\begin{equation*}
  S^{\ell+1}T
  =\Omega\!\left(
    N^{\frac{1}{2(\ell-1)}+\frac{1}{\ell+1}}
  \right). \qedhere
\end{equation*}
\end{proof}

\section{Space-sensitive Quantum Query Advantage for a Total Boolean Function}
\label{sec:sum-or}

\renewcommand{\theHdefinition}{sumor.\arabic{definition}}
\renewcommand{\theHtheorem}{sumor.\arabic{theorem}}

\subsection{Construction and Main Result}

The construction in $\ell$-Sum-OR Parity differs from \FilteredSum{} defined in \cref{def:task}
in three aspects.
\begin{enumerate}
  \item
  We turn the original problem concerning random functions into a worst-case statement about a total function.
  Hence the three input ``oracles'' are now arbitrary truth tables instead of random functions.
  We remark that here the truth table for $H$ should be bit encoded, so querying one whole value $H(x)\in[N]$ costs $\log N$ bit queries.

  \item
  The function discards the filtered-parity term when the filtered set,
  $Y_H$, is too large
  (specifically, when $|Y_H|>\left\lceil 2\binom{M}{\ell}/N\right\rceil$).
  \item
  A disjoint OR block is XORed with the parity block.
\end{enumerate}



\begin{theorem}[Total quantum speedup and small-space reversal]
\label{thm:sum-or-main}
For every sufficiently large integer $n$, the following bounds hold for
\emph{$\ell$-Sum-OR Parity Problem}:
\begin{align*}
  D=R
  &=
  \widetilde\Theta\!\left(
    M^{1+(\ell-3)/(4(\ell+1))}
  \right),\\
  Q
  &=\widetilde{\Theta}(M).
\end{align*}
A deterministic classical algorithm uses $O(M\log M)$ bits of space.
Fix a constant $0<\varepsilon\le1/2$.
Suppose a uniform quantum algorithm uses $S$ qubits, makes $T$ total queries,
and succeeds with probability at least $1/2+\varepsilon$ in the worst case.
Then $T=\omega\!\left(M^{1+(\ell-3)/(4(\ell+1))}\right)$ when
$ \Omega(\log M)  \le S
  \le
  M^{(\ell-3)/(4{(\ell+1)}^2)-\eta}$,
where $\eta$ is a fixed positive constant satisfying
$0<\eta<\frac{\ell-3}{4{(\ell+1)}^2}$.
In fact, the following tradeoff holds:
\begin{equation*}
  S^{\ell+1}T = \widetilde\Omega\!\left(
    \varepsilon M^{3/2-2/(\ell+1)}
  \right).
\end{equation*}
when
$
  S=\Omega(\log M)$
 and
 $ S=O\left(
    M^{(\ell-1)/((\ell+1)(2\ell+1))}
  \right)$.
\end{theorem}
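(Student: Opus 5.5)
The proof splits into four parts: the classical upper bound, the classical lower bound, the unrestricted quantum bounds, and the space-bounded lower bound. Throughout, set $L=\lceil M^{1+(\ell-3)/(4(\ell+1))}\rceil$ and note that $L\gg M\log N$.

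\emph{Classical upper bound.} We run \cref{alg:classical} on the bit-encoded $H$ at a cost of $Mn=O(M\log M)$ bit queries. We enumerate $Y_H$ and stop as soon as the count exceeds $\lceil 2\binom M\ell/N\rceil=O(M)$; in that case the parity term is discarded. Otherwise we query the $O(M)$ relevant $G$-bits. Finally we read all of $z$. This is deterministic, correct on every input, and uses $O(M\log M+L)=O(L)$ queries and $O(M\log M)$ bits of space.

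\emph{Classical lower bound.} It suffices to show $R=\Omega(L)$. We fix $H$ with $|Y_H|$ in the allowed range and $G$ arbitrary. The function then equals a fixed bit XORed with $\mathrm{OR}(z)$, and $R(\mathrm{OR}_L)=\Omega(L)$.

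\emph{Unrestricted quantum bounds.} For the upper bound, $O(M\log M)$ queries read $H$ classically. Grover-type counting is not needed for $Y_H$: once $H$ is known we compute $Y_H$ offline and query its $O(M)$ $G$-bits (or discard the term if $|Y_H|$ is too large). Grover search on $z$ costs $O(\sqrt L)=o(M)$. The total is $\widetilde O(M)$ queries, with error reduction by repetition. For the lower bound $Q=\Omega(M)$, we embed a parity. Choose a fixed $H$ with $|Y_H|=\Theta(M)$ within the threshold, which exists by \cref{lem:size-of-y}, and set $z=0$. The function is then the parity of $\Theta(M)$ $G$-bits, which needs $\Omega(M)$ quantum queries.

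\emph{Space-bounded lower bound.} This is the main work. Given an $S$-space algorithm $\mathcal B$ with worst-case success $1/2+\varepsilon$ for $\mathsf{SumOR}$ using $T$ bit queries, we build an algorithm $\mathcal A$ for \FilteredSum{} as follows. $\mathcal A$ sets $z=0$, and it simulates each bit query to $H$ by one query to the word oracle $H$, computing then uncomputing the value in $O(\log N)$ ancilla qubits. Queries to $G$ carry over directly.

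On random $(H,G)$, when $|Y_H|\le\lceil2\binom M\ell/N\rceil$ the output of $\mathsf{SumOR}$ equals $F(H,G)$. By \cref{lem:size-of-y}, or by Chebyshev with the exact variance computed there, this event fails with probability $O(1/M)$. Hence $\mathcal A$ succeeds with probability at least $1/2+\varepsilon-O(1/M)$, using $S+O(\log N)$ qubits and at most $2T$ queries.

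We then apply \cref{thm:success-forces-capacity} and \cref{thm:selected-capacity-upper-bound}, that is, \cref{thm:main}(ii). This gives $(S+O(\log N))^{\ell+1}T=\Omega(N^{1/(2(\ell-1))+1/(\ell+1)})$. Substituting $N=\Theta(M^{\ell-1})$, the right side becomes $M^{1/2+(\ell-1)/(\ell+1)}=M^{3/2-2/(\ell+1)}$, and the $\varepsilon$-dependence is at least as stated. Since $S=\Omega(\log M)$ absorbs the additive $O(\log N)$, we get $S^{\ell+1}T=\widetilde\Omega(M^{3/2-2/(\ell+1)})$. The space range converts the same way: $N^{1/((\ell+1)(2\ell+1))}=M^{(\ell-1)/((\ell+1)(2\ell+1))}$.

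\emph{Consequence.} If $S\le M^{(\ell-3)/(4(\ell+1)^2)-\eta}$, then $T\ge M^{3/2-2/(\ell+1)-(\ell-3)/(4(\ell+1))+(\ell+1)\eta}/\mathrm{polylog}$. The exponent equals $1+(\ell-3)/(4(\ell+1))+(\ell+1)\eta$, since
\[
\tfrac32-\tfrac2{\ell+1}-\tfrac{\ell-3}{4(\ell+1)}=1+\tfrac{2(\ell+1)-8-(\ell-3)}{4(\ell+1)}=1+\tfrac{\ell-3}{4(\ell+1)}.
\]
Therefore $T=\omega(L)$, as required. We must also check that this range of $S$ lies within the range where the tradeoff applies.

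The main obstacle is making the reduction fully rigorous. This includes the worst-case-to-average-case step with the $O(1/M)$ threshold-failure loss, keeping the ancilla overhead within the hypotheses of the HHL bound, and confirming that the capacity argument tolerates $\mathcal A$'s modified $H$-oracle access, which it does because $H$ is treated as fixed.
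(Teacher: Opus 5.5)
Your proposal is correct and follows the paper's proof. It uses the same store-$H$/enumerate/read-$z$ algorithm, the OR and parity restrictions, and fixes $z=0$ with the bit-to-word simulation so that a capacity lower bound can be matched against the HHL capacity upper bound. The one difference is bookkeeping in the capacity lower bound: you turn worst-case success into average-case success for \FilteredSum{} and invoke \cref{thm:main}(ii) as a black box, losing the $O(1/M)$ probability that $|Y_H|>\lceil 2\binom{M}{\ell}/N\rceil$ (this needs Chebyshev at $2\mu$ as in \cref{cor:sum-or-concentration}, not \cref{lem:size-of-y}), whereas the paper applies the worst-case guarantee directly to get $\beta_H\ge 2\varepsilon$ for every $H$ with $\mu/2\le|Y_H|\le 2\mu$ and then averages $K_H\ge|Y_H|\beta_H^2/4$ (\cref{lem:sum-or-forces-capacity}); both give the same tradeoff for constant $\varepsilon$.
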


Substituting $\ell=4$ immediately gives the following corollary.

\begin{corollary}
\label{cor:sum-or-four}
For $\ell=4$, let $M=\lfloor N^{1/3}\rfloor$.
Set $L = \left\lceil M^{21/20}\right\rceil$.
For algorithms with constant advantage, if they are space-unrestricted, then
\begin{align*}
  D=R
  &= \Theta(M^{21/20})
  = \Theta(N^{7/20}),
  \\
  Q
  &= \widetilde\Theta(M)
  = \widetilde\Theta(N^{1/3}).
\end{align*}
Hence quantum algorithms are strictly faster than classical ones.
For algorithms using space $S$ and query bound $T$, the following
tradeoff holds:
\begin{equation*}
  S^5T = \widetilde\Omega(M^{11/10})
  = \widetilde\Omega(N^{11/30})
\end{equation*}
when $\Omega(\log M)\le S\le O(M^{1/15})$.
Consequently,
$T=\omega(M^{21/20})=\omega(N^{7/20})$ when
$\Omega(\log M)\le S\le M^{1/100-\eta}$ and $\eta$ is a constant
satisfying $0<\eta<1/100$.
In this space range, the quantum query exceeds the classical one.
\end{corollary}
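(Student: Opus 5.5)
This corollary is a direct specialization of \cref{thm:sum-or-main} to $\ell=4$, so the plan is to substitute $\ell=4$ into every exponent and range there and simplify, using $M=\lfloor N^{1/3}\rfloor$, i.e.\ $N=\Theta(M^3)$, to translate between $M$ and $N$.

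First, the unrestricted bounds. The classical exponent is $1+(\ell-3)/(4(\ell+1))=1+1/20=21/20$, which gives $D=R=\Theta(M^{21/20})$. Since $M^{21/20}=\Theta(N^{7/20})$, this is also $\Theta(N^{7/20})$. Here I would cite the sharper $\Theta$ form from \cref{thm:intro-sum-or} rather than the $\widetilde\Theta$ form, because the corollary claims $\Theta$. The quantum bound $Q=\widetilde\Theta(M)=\widetilde\Theta(N^{1/3})$ is immediate. Because $M^{21/20}$ beats any polylogarithmic factor times $M$, we get $Q<R$ for large $n$.

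Next, the tradeoff. The exponent $3/2-2/(\ell+1)$ becomes $3/2-2/5=11/10$, and $\ell+1=5$, so $S^5T=\widetilde\Omega(M^{11/10})$. With constant $\varepsilon$ the factor $\varepsilon$ is absorbed. Converting, $M^{11/10}=\Theta(N^{11/30})$. The admissible space range is $S\le O(M^{(\ell-1)/((\ell+1)(2\ell+1))})=O(M^{3/45})=O(M^{1/15})$.

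Finally, the reversal. The threshold exponent $(\ell-3)/(4(\ell+1)^2)$ becomes $1/100$. For $S\le M^{1/100-\eta}$, we have $S^5\le M^{1/20-5\eta}$, so $T=\widetilde\Omega(M^{11/10-1/20+5\eta})=\widetilde\Omega(M^{21/20+5\eta})$. The gain $M^{5\eta}$ dominates polylog factors, so $T=\omega(M^{21/20})$. This range also sits inside the tradeoff's range, since $1/100<1/15$.

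The main obstacle is only bookkeeping. One must make sure the $\widetilde\Omega$ hides only polylogarithmic factors, so that $\eta>0$ suffices. One must also reconcile the $\Theta$ versus $\widetilde\Theta$ statements for $D$ and $R$. Everything else is substitution.
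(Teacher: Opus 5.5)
Your proposal is correct and follows the paper's route exactly: the corollary is just \cref{thm:sum-or-main} with $\ell=4$ substituted, with the sharp $\Theta$ bounds for $D=R$ supplied by \cref{prop:sum-or-unrestricted} and the tradeoff by \cref{prop:sum-or-tradeoff}. Your exponent bookkeeping is right throughout: $21/20$, $11/10$, the space bound $1/15$, the threshold $1/100$, and the resulting $T=\widetilde\Omega(M^{21/20+5\eta})$, along with your remark that the $\widetilde\Omega$ hides only polylogarithmic factors.
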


The rest of this section proves \cref{thm:sum-or-main,cor:sum-or-four}:
\cref{prop:sum-or-unrestricted} gives the unrestricted query complexities,
and \cref{prop:sum-or-tradeoff} gives the bounded-space tradeoff and reversal.

\subsection{Standard Query Facts}

We will use the following standard facts about OR and parity.

\begin{lemma}
\label{lem:sum-or-or-query}
For OR on $L$ bits, $R(\operatorname{OR}_L)=\Omega(L)$ and
$Q(\operatorname{OR}_L)=O(\sqrt L)$.
\end{lemma}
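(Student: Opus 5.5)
The lemma has two halves: a randomized lower bound of order $L$ and a quantum upper bound of order $\sqrt L$. I would prove them separately by standard arguments.

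\emph{Quantum upper bound.} I would invoke Grover's search~\cite{Gro96} on the $L$-bit string $z$, using the standard phase oracle $\ket{i}\mapsto(-1)^{z_i}\ket{i}$ obtained from one bit query with the target qubit in $\ket{\widehat 1}$. If $z$ has $t\ge1$ marked positions, Grover with unknown $t$ finds one with constant probability using $O(\sqrt{L/t})=O(\sqrt L)$ queries. The cleanest route is the Boyer--Brassard--H{\o}yer--Tapp variant with exponentially increasing random iteration counts. The algorithm then checks the returned index with one classical query: it outputs $1$ if the checked bit is $1$ and $0$ otherwise. This has one-sided error, because on $z=0^L$ it always outputs $0$, and a constant number of repetitions drives the error below $1/3$. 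Hence $Q(\operatorname{OR}_L)=O(\sqrt L)$.

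\emph{Randomized lower bound.} I would use Yao's minimax principle with the hard distribution $\mu$ that puts probability $1/2$ on $0^L$ and probability $1/2$ on the uniform distribution over the weight-one strings $e_i$. Fix a deterministic decision tree making $q$ queries. On input $0^L$ it follows a single path, querying a fixed set $P$ of at most $q$ positions, and gives a fixed output $b$.
\begin{itemize}
\item If $b=1$, the tree errs on $0^L$, which has $\mu$-mass $1/2$.
\item If $b=0$, then for every $i\notin P$ the input $e_i$ follows the same path and also receives output $0$, so it is misclassified.
\end{itemize}
In the second case the error is at least $\tfrac12\cdot(L-q)/L$. If the error must be at most $1/3$, this forces $(L-q)/L\le 2/3$, i.e.\ $q\ge L/3$. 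Therefore the distributional complexity, and hence $R(\operatorname{OR}_L)$, is $\Omega(L)$.

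Both steps are textbook. The only point needing care is the quantum upper bound when the number of marked items is unknown, which the BBHT schedule (or simply citing~\cite{Gro96}) handles. I do not expect a real obstacle.
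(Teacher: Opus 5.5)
Your proof is correct. The paper gives no proof of this lemma and simply lists it as a standard fact, and your two arguments are the textbook ones behind it: Grover/BBHT search with one-sided error and a verification query for the upper bound, and a Yao-style averaging argument over the distribution mixing $0^L$ with a uniformly random $e_i$ for the lower bound. One detail to make explicit: BBHT's $O(\sqrt{L/t})$ is an expected query count. You should therefore cap the run at $O(\sqrt L)$ queries and apply Markov's inequality to obtain a worst-case bound; the cap also handles the case $t=0$.
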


\begin{lemma}
\label{lem:sum-or-parity-query}
For parity on $k$ bits, $Q(\operatorname{PARITY}_k)=\Omega(k)$.
\end{lemma}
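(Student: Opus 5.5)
The statement is the standard lower bound $Q(\operatorname{PARITY}_k)=\Omega(k)$; in fact $Q(\operatorname{PARITY}_k)=\lceil k/2\rceil$ for bounded error. I would prove it with the polynomial method of Beals, Buhrman, Cleve, Mosca and de Wolf.

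The argument has three steps.

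\textbf{Step 1: acceptance probability is a low-degree polynomial.} Take any $T$-query quantum algorithm on input $x\in\{0,1\}^k$. Induct on the number of queries to show that each amplitude of the state after $t$ queries is a multilinear polynomial in $x_1,\ldots,x_k$ of degree at most $t$. The inductive step holds because the query unitary $\ket{i,b}\mapsto\ket{i,b\oplus x_i}$ multiplies amplitudes by expressions of the form $x_i$ or $1-x_i$, while the non-query unitaries are fixed linear maps. Hence the acceptance probability $p(x)$, being a sum of squared moduli of amplitudes, is a real multilinear polynomial of degree at most $2T$.

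\textbf{Step 2: bounded error forces full degree.} Suppose the algorithm has bounded error $1/3$. Then $q(x):=1-2p(x)$ satisfies $q(x)\cdot(-1)^{x_1\oplus\cdots\oplus x_k}\ge 1/3$ for every $x$. Pass to $\pm1$ variables $y_i=(-1)^{x_i}$ and expand $q$ in the Fourier basis $\chi_S(y)=\prod_{i\in S}y_i$. If $\deg q<k$, then $q$ has no component on $\chi_{[k]}$, so
\[
  \E_y[q(y)\chi_{[k]}(y)]=\widehat q([k])=0
\]
by orthogonality of characters. This contradicts the pointwise lower bound $q\chi_{[k]}\ge 1/3$, which gives expectation at least $1/3$.

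\textbf{Step 3: conclude.} From Step 2, $2T\ge\deg q=\deg p\ge k$, so $T\ge k/2$. For other constant error levels $1/2-\varepsilon$, the same argument applies, since only positivity of the expectation was used.

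The only step that needs any care is the degree bound for amplitudes in Step 1. It is routine, so I expect no real obstacle; one may equally just cite the known result.
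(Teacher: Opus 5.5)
Your proposal is correct: the degree bound $\deg p\le 2T$, the observation that $\widehat{q}([k])=0$ whenever $\deg q<k$ contradicts the pointwise bound $q\chi_{[k]}\ge 1/3$, and the conclusion $T\ge k/2$ are all sound, and since only positivity is used the argument covers any constant advantage $\varepsilon$, which is the form the paper needs. The paper gives no proof of this lemma and simply lists it as a standard fact, so your polynomial-method argument (Beals, Buhrman, Cleve, Mosca and de Wolf) is exactly the standard result being invoked there.
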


We also need to encode the random oracles as Boolean tables, which is specifically listed in the following lemma.

\begin{lemma}\label{lem:bit-to-word-simulation}
Let $F:[K]\to[R]$ be an oracle.
Assume that $K$ and $R$ are powers of two, and set $r=\log R$.
Fix an $r$-bit encoding of the range of $F$.
The standard oracle query maps
\begin{equation*}
  \ket{x,y}
  \mapsto
  \ket{x,y\oplus F(x)}.
\end{equation*}
The corresponding bit-query maps
\begin{equation*}
  \ket{x,j,b}
  \mapsto
  \ket{x,j,b\oplus F_j(x)},
\end{equation*}
where $j\in[r]$ and $F_j(x)$ is the $j$th bit of the encoding of $F(x)$.
Then the following simulations hold.
\begin{enumerate}
  \item
  If a quantum algorithm $\calA$ uses $S$ qubits and $T$ bit queries, then a standard-oracle algorithm $\calB$ has the same output distribution, uses $S+O(r)$ qubits, and $2T$ standard queries.
  \item
  If a quantum algorithm $\calA$ uses $S$ qubits and $T$
  standard queries to $F$, then a bit-query algorithm $\calB$ has the same output distribution, uses $S+O(\log r)$ qubits, and $rT$ bit queries.
\end{enumerate}
\end{lemma}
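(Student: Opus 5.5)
We prove the two simulations of \cref{lem:bit-to-word-simulation} separately. Each is a direct circuit construction, and we count the extra workspace carefully.

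For part~(1), we simulate one bit query $\ket{x,j,b}\mapsto\ket{x,j,b\oplus F_j(x)}$ using two standard queries and an $r$-qubit ancilla. First, prepare an ancilla register $\ket{0^r}$ and apply the standard oracle to $\ket{x}\ket{0^r}$, which produces $\ket{x}\ket{F(x)}$. Next, apply a controlled operation that, conditioned on $j$, XORs the $j$th bit of the ancilla into $b$. This is a sequence of $r$ controlled-CNOTs, each controlled on the comparison $j=j'$. That comparison can be computed reversibly with $O(\log r)$ extra qubits, which is at most $O(r)$, and then uncomputed. Finally, apply the standard oracle again to restore the ancilla to $\ket{0^r}$, since $F(x)\oplus F(x)=0$. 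The resulting map is exactly the bit-query unitary, and the ancilla returns to $\ket{0^r}$ unentangled from the rest. Replacing every bit query of $\calA$ this way gives $\calB$ with the same unitary evolution, hence the same output distribution. It uses $2T$ standard queries and $S+O(r)$ qubits, because the ancilla is reused across queries.

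For part~(2), we simulate one standard query $\ket{x,y}\mapsto\ket{x,y\oplus F(x)}$ using $r$ bit queries, one per output bit. Keep a $\lceil\log r\rceil$-qubit counter register initialized to $\ket{0}$. For each $j\in[r]$, set the counter to $j$ (a classical, data-independent operation), and perform a controlled swap that routes the $j$th qubit $y_j$ of $y$ into the answer position. Then apply the bit query $\ket{x,j,y_j}\mapsto\ket{x,j,y_j\oplus F_j(x)}$, swap back, and reset the counter to $0$. Alternatively, since $j$ is classical, we can simply address qubit $y_j$ directly; the counter only has to hold the index $j$ that the query interface reads. After all $r$ rounds, every bit of $y$ has been XORed with the corresponding bit of $F(x)$, so the map is exactly the standard oracle, with the counter returned to $\ket 0$. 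The simulation therefore costs $rT$ bit queries and $S+O(\log r)$ qubits, and the output distribution is unchanged.

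The only delicate point is ensuring that all ancillas are restored exactly, so the simulated operation equals the target unitary rather than an entangled approximation. Getting this right is what makes the output distributions identical and not merely close. This exact restoration is handled above by the second, uncomputing query in part~(1) and by resetting the counter deterministically in part~(2). A secondary bookkeeping check is that the ancillas are reused across queries, so the space overhead is additive and does not scale with $T$.
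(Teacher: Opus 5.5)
Your proposal is correct and follows the paper's proof essentially step for step. For part~(1) you use the same $r$-qubit scratch register with compute, CNOT of the $j$th bit, and uncompute; for part~(2) you use one bit query per $j\in[r]$. Your explicit treatment of the superposed index $j$ through a comparison circuit, and of the $\lceil\log r\rceil$-qubit index register, is bookkeeping the paper leaves implicit; it does not change the argument.
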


\begin{proof}
\emph{Bit queries from standard queries.}
The simulator $\calB$ runs the circuit of $\calA$ and simulates each bit query.
To simulate a query to $F_j(x)$, it uses an $r$-qubit scratch register initialized to zero.
It first queries the standard oracle to write $F(x)$ into the scratch register.
It CNOTs the $j$th scratch bit into the answer qubit of the bit query.
It then queries the standard oracle again to restore the scratch register to zero.
On each computational-basis query state, this is exactly
\begin{equation*}
  \ket{x,j,b}
  \mapsto
  \ket{x,j,b\oplus F_j(x)}.
\end{equation*}
Thus every bit query costs at most two standard queries.

\emph{Standard queries from bit queries.}
The simulator $\calB$ again runs the circuit of $\calA$.
To simulate one standard query, it applies the bit-query oracle once for each
$j\in[r]$; the product of these queries is exactly
\begin{equation*}
  \ket{x,y}
  \mapsto
  \ket{x,y\oplus F(x)}.
\end{equation*}
Thus every standard query costs at most $r$ bit queries.
\end{proof}

We've introduced \cref{lem:size-of-y} to analyze the size of the filtered set $Y_H$. 
However, the following corollary would still simplify the analysis of the Boolean version of the problem.

\begin{corollary}
  \label{cor:sum-or-concentration}
Set $\mu:=\binom{M}{\ell}/N$.
For a uniform function $H:[M]\to[N]$,
\begin{equation*}
  \Prb_H\!\left[
    \frac{\mu}{2}
    \le |Y_H|
    \le2\mu
  \right]
  \ge1-\frac{5}{\mu}.
\end{equation*}
\end{corollary}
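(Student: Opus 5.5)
The plan is to reuse the first- and second-moment computations from the proof of \cref{lem:size-of-y}, which already show $\E_H|Y_H|=\mu$ and $\operatorname{Var}_H(|Y_H|)\le\mu$. The indicators $\mathds1[\sum_{x\in A}H(x)=0\pmod N]$ for distinct $A\in\binom{[M]}\ell$ are pairwise uncorrelated, so the variance is at most $\binom M\ell\frac1N=\mu$. The only new ingredient is to apply Chebyshev's inequality with a two-sided window measured relative to $\mu$, not to $M$.

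First I would bound the lower tail. Since $|Y_H|<\mu/2$ implies $\bigl||Y_H|-\mu\bigr|>\mu/2$, Chebyshev gives
\begin{equation*}
  \Prb_H\!\left[|Y_H|<\frac{\mu}{2}\right]
  \le\frac{\operatorname{Var}_H(|Y_H|)}{(\mu/2)^2}
  \le\frac{4}{\mu}.
\end{equation*}
Next I would bound the upper tail. Since $|Y_H|>2\mu$ implies $|Y_H|-\mu>\mu$, Chebyshev gives
\begin{equation*}
  \Prb_H\!\left[|Y_H|>2\mu\right]\le\frac{\operatorname{Var}_H(|Y_H|)}{\mu^2}\le\frac1\mu.
\end{equation*}
A union bound over the two tails then shows that the complement event has probability at most $5/\mu$, which is exactly the claimed bound.

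There is no serious obstacle. The one point to check is that the pairwise-independence argument in \cref{lem:size-of-y} applies to every pair of distinct $\ell$-sets. That argument picks $a\in A\setminus B$ and $b\in B\setminus A$; both exist whenever $A\neq B$ have the same size $\ell$. Given this, the variance bound $\operatorname{Var}_H(|Y_H|)\le\mu$ holds exactly, with no asymptotic hypothesis. Consequently the statement is valid for all $M,N$, and it is nontrivial only when $\mu>5$.
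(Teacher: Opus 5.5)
Your proposal is correct and matches the paper's proof: reuse the moment computation from \cref{lem:size-of-y} to get $\E_H|Y_H|=\mu$ and $\operatorname{Var}_H(|Y_H|)\le\mu$, then apply Chebyshev to each tail and take a union bound. You also supply the explicit tail bounds $4/\mu$ and $1/\mu$, which the paper leaves implicit.
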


\begin{proof}
The moment calculation in the proof of \cref{lem:size-of-y} applies here.
It gives $\E_H[|Y_H|]=\mu$ and
$\operatorname{Var}_H(|Y_H|)\le\mu$.
Chebyshev's inequality and a union bound prove the claim.
\end{proof}

\subsection{Unrestricted Query Complexities}

We next determine the three unrestricted query complexities.

\begin{proposition}
\label{prop:sum-or-unrestricted}
The unrestricted query complexities satisfy
\begin{align*}
  D=R
  &=
  \Theta\!\left(
    M^{1+(\ell-3)/(4(\ell+1))}
  \right),
  \\
  Q
  &=\widetilde\Theta(M).
\end{align*}
Moreover, the upper bound for $D$ is achieved by a deterministic algorithm
using $O(M\log M)$ bits.
\end{proposition}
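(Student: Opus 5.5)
The plan is to prove the three claims of \cref{prop:sum-or-unrestricted} separately: a deterministic upper bound, a randomized lower bound from the OR block, and matching quantum upper and lower bounds. Throughout, set $c:=\lceil 2\binom{M}{\ell}/N\rceil$ and $\mu:=\binom{M}{\ell}/N$. From the proof of \cref{lem:size-of-y}, $\mu=\Theta(M)$, so $c=O(M)$. The exponent $\delta:=(\ell-3)/(4(\ell+1))$ is a positive constant for $\ell\ge4$, and $n=\log N=\Theta(\log M)$. Hence $L=\lceil M^{1+\delta}\rceil$ dominates $Mn=O(M\log M)$. These two facts make every lower-order term below disappear.

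\emph{Deterministic upper bound.} The algorithm proceeds in three steps.
\begin{enumerate}
\item Read all $Mn$ bits of $H$ and store them in $O(M\log M)$ bits.
\item Enumerate the strictly increasing $\ell$-tuples as in \cref{alg:classical}, counting the tuples of $Y_H$ with an $O(\log M)$-bit counter. If the count exceeds $c$, stop using $G$. Otherwise, make a second enumeration pass and query $G(\mathbf x)$ for each $\mathbf x\in Y_H$, accumulating the parity in one bit. This uses at most $c=O(M)$ queries.
\item Read all $L$ bits of $z$ and output according to \cref{def:intro-sum-or}.
\end{enumerate}
The total number of queries is $Mn+O(M)+L=O(L)$, and the space is $O(M\log M)$ bits. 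This gives $D=O(L)$, and hence $R\le D=O(L)$.

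\emph{Randomized lower bound.} I would restrict to a hard subfamily of inputs. Fix $H\equiv 1$, which is legitimate as a worst-case input. Every tuple then sums to $\ell\not\equiv0\pmod N$ for large $n$, so $Y_H=\varnothing$, and the function reduces to $\operatorname{OR}(z)$. Any randomized algorithm for $\mathsf{SumOR}_{\ell,n}$ therefore computes $\operatorname{OR}_L$. By \cref{lem:sum-or-or-query}, $R\ge\Omega(L)$, and so $D=R=\Theta(L)$.

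\emph{Quantum upper bound.} The algorithm again proceeds in three steps.
\begin{enumerate}
\item Read $H$ classically with $O(M\log M)$ queries and compute $Y_H$.
\item If $|Y_H|\le c$, query the at most $c=O(M)$ relevant $G$-bits classically to get the parity exactly.
\item Compute $\operatorname{OR}(z)$ with Grover search using $O(\sqrt L)$ queries, boosted to error at most $1/3$ by constant repetition.
\end{enumerate}
Since $\sqrt L=M^{(1+\delta)/2}=o(M)$, the total is $\widetilde O(M)$.

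\emph{Quantum lower bound.} By \cref{cor:sum-or-concentration}, since $\mu\to\infty$, some $H^\ast$ satisfies $\mu/2\le|Y_{H^\ast}|\le2\mu\le c$. Fix $H=H^\ast$ and $z=0$. The function then becomes the parity of the $|Y_{H^\ast}|=\Omega(M)$ bits $G|_{Y_{H^\ast}}$, and the remaining bits of $G$ are irrelevant. By \cref{lem:sum-or-parity-query}, $Q=\Omega(M)$, and together with the upper bound $Q=\widetilde\Theta(M)$.

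I expect no real obstacle. The step needing the most care is the bookkeeping that $Mn$ and $\sqrt L$ are genuinely lower order than $L$ and $M$ respectively, which uses $\delta>0$ and $\delta<1$. A second point of care is that the restrictions used in the lower bounds, $H\equiv1$ and $H=H^\ast$, are valid fixed inputs, since the function is total and the lower bounds are worst-case.
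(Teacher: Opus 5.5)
Your proposal is correct and follows the paper's proof essentially step for step. The paper uses the same store-$H$, enumerate-$Y_H$-with-cutoff algorithm for the upper bounds, with Grover on $z$ in the quantum case. It likewise uses an $\operatorname{OR}_L$ restriction for the randomized lower bound, and a typical $H^\ast$ from \cref{cor:sum-or-concentration} with $z=0$ to embed parity for the quantum lower bound. The only cosmetic difference is how the parity term is killed: you fix $H\equiv1$ (so $Y_H=\varnothing$), whereas the paper fixes every $G$-bit to zero, and either restriction yields $\operatorname{OR}_L$ equally well.
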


\begin{proof}

\emph{The classical part.}
There is a simple algorithm that shows the upper bound.
It first queries and stores all $M\log N$ bits of $H$, then enumerates $Y_H$
locally and stops counting after $\lceil2\mu\rceil+1$ tuples.
If the count exceeds $\lceil2\mu\rceil$, query every bit of $z$ and return its
OR\@.
Otherwise, enumerate $Y_H$ again and query all selected $G$-bits, XOR their
values, and then XOR the result with the OR of $z$.
This algorithm makes at most $M\log N+\lceil2\mu\rceil+L=\Theta(L)$ queries and
uses $O(M\log M)$ bits of space.
Recall that $L=\lceil M^{1+(\ell-3)/(4(\ell+1))}\rceil$ shows the desired
upper bound.
As for the lower bound, fixing every $G$-bit to zero restricts the function to
OR on $L$ bits.
Applying \cref{lem:sum-or-or-query} to this restriction proves the randomized
query complexity $\Omega(L)$.

\emph{The quantum part.}
For the quantum upper bound, the same algorithm described above still works,
and we can speed the OR part by Grover search.
Hence the quantum query complexity is at most
\begin{equation*}
  O(M\log N+\lceil2\mu\rceil+\sqrt L)
  =
  \widetilde O(M).
\end{equation*}
Here we use the fact that $L=o(M^2)$ for every fixed $\ell\ge4$.

By \cref{cor:sum-or-concentration}, some function $H^*$ satisfies
$\mu/2\le|Y_{H^*}|\le2\mu$.
Hence $Y_{H^*} = \Theta(M)$.
Fix this $H^*$ and fix $z$ to the all-zero string.
The expected output of the resulting function would be parity on $\Theta(M)$
distinct bits.
By \cref{lem:sum-or-parity-query}, it requires $\Omega(M)$ queries.
This proves the quantum lower bound.
\end{proof}

\subsection{The Bounded-Space Reversal}

\begin{lemma}
\label{lem:sum-or-forces-capacity}
Fix $0<\varepsilon\le1/2$.
Let $\calA$ compute $\ell$-Sum-OR Parity with worst-case success probability
at least $1/2+\varepsilon$.
After fixing $z$ to zero, its selected recording capacity satisfies
$K(\calA)=\Omega(\varepsilon^2M)$.
\end{lemma}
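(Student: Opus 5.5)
The plan is to reduce to the fixed-$H$ parity-to-capacity inequality, \cref{lem:parity-to-capacity}, applied on the event where $Y_H$ has typical size. I will not use the averaged argument of \cref{thm:success-forces-capacity}, because worst-case success gives a stronger, pointwise bias guarantee.

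\emph{Step 1: put $\calA$ into the oracle model.} Fix $z=0^L$. Every query to the $z$-block then returns a known constant, so it can be absorbed into the inter-query channels. The $G$-block is a Boolean truth table indexed by $[M]^\ell$, so a bit query to $G$ is exactly a standard one-bit query $O_G^{\mathrm{std}}$. Queries to the bit-encoded $H$-block are replaced by standard $H$-queries using \cref{lem:bit-to-word-simulation}. This preserves the output distribution, adds only $O(\log N)$ qubits, and at most doubles the number of queries.

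For each fixed $H$, the $H$-queries are then just part of the $H$-dependent channels $\Lambda_i^{(H)}$. This gives exactly the form \eqref{eq:recording-final-state}. Hence $\rho_H$, $K_H(\calA)$ and $\beta_H$ (defined with respect to $F(H,G)=\bigoplus_{\mathbf x\in Y_H}G(\mathbf x)$) make sense, and \cref{lem:parity-correlation,lem:parity-to-capacity} apply verbatim.

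\emph{Step 2: pointwise bias on typical $H$.} Let $\mu=\binom M\ell/N=\Theta(M)$ and $\mathcal E:=\{H:\mu/2\le|Y_H|\le 2\mu\}$. For $H\in\mathcal E$ we have $|Y_H|\le\lceil 2\binom M\ell/N\rceil$. With $z=0$, \cref{def:intro-sum-or} then gives $\mathsf{SumOR}_{\ell,n}(H,G,0)=F(H,G)$ for every $G$. Worst-case success at least $1/2+\varepsilon$ therefore holds for every such pair $(H,G)$. Averaging over uniform $G$ gives $\beta_H\ge 2\varepsilon$ for every $H\in\mathcal E$.

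\cref{lem:parity-to-capacity} then yields
\begin{equation*}
  K_H(\calA)\ge |Y_H|\beta_H^2/4\ge \frac{\mu}{2}\cdot\varepsilon^2
\end{equation*}
for every $H\in\mathcal E$.

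\emph{Step 3: average over $H$.} Since $K_H\ge0$ always, \cref{cor:sum-or-concentration} gives
\begin{equation*}
  K(\calA)=\E_H[K_H]\ge \Prb_H[\mathcal E]\,\frac{\mu\varepsilon^2}{2}\ge\left(1-\frac5\mu\right)\frac{\mu\varepsilon^2}{2}=\Omega(\varepsilon^2M)
\end{equation*}
for large $n$.

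\emph{Main obstacle.} The only delicate point is Step 1: checking that the capacity $K(\calA)$ is well defined for the translated algorithm. Concretely, the bit-encoded $H$-queries and the fixed $z$-queries must only alter the $H$-dependent channels and never touch $\Gdbreg$. The translated algorithm should also be the one to which the later HHL upper bound (\cref{thm:selected-capacity-upper-bound}) will be applied, with the $O(\log N)$ overhead in qubits and the polylogarithmic overhead in queries tracked. Once this is settled, the rest is the short computation above.
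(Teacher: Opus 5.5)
Your proof is correct and is essentially the paper's argument. On the typical event $\mu/2\le|Y_H|\le2\mu$ the threshold does not fire and $z=0$, so worst-case success gives the pointwise bias $\beta_H\ge2\varepsilon$; \cref{lem:parity-to-capacity} then gives $K_H(\calA)\ge\mu\varepsilon^2/2$, and averaging with \cref{cor:sum-or-concentration} yields $(1-5/\mu)\mu\varepsilon^2/2$. The query-model translation in your Step 1 is something the paper defers to \cref{prop:sum-or-tradeoff}; for this lemma alone it is not needed, because \cref{lem:parity-correlation,lem:parity-to-capacity} only use the premeasurement state $\rho_H$ (with $G$ purified into $\Gdbreg$), so the $H$- and $z$-queries can simply be absorbed into the $H$-dependent channels.
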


\begin{proof}
Consider an $H^*$ satisfying $\mu/2\le|Y_{H^*}|\le2\mu$.
The threshold rule does not fire, so the target is parity on $Y_{H^*}$.
By \cref{lem:parity-correlation} and the inequality in
\cref{lem:parity-to-capacity},
$K_{H^*}(\calA)\ge\mu\varepsilon^2/2$.
Averaging this bound and applying \cref{cor:sum-or-concentration} gives
\begin{equation*}
  K(\calA)
  \ge
  \left(1-\frac{5}{\mu}\right)
  \frac{\mu\varepsilon^2}{2}
  =\Omega(\varepsilon^2M).\qedhere
\end{equation*}
\end{proof}

\begin{proposition}[Bounded-space tradeoff and reversal]
\label{prop:sum-or-tradeoff}
Fix a constant $0<\varepsilon\le1/2$.
Let a uniform quantum algorithm compute $\ell$-Sum-OR Parity with worst-case
success at least $1/2+\varepsilon$.
Suppose that it uses $S$ qubits and makes $T$ total queries.
If
$  S=\Omega(\log M)$
and $
  S=O\left(
    M^{(\ell-1)/((\ell+1)(2\ell+1))}
  \right)$,
then
\begin{equation}
  S^{\ell+1}T = \widetilde\Omega\!\left(
    \varepsilon
    M^{3/2-2/(\ell+1)}
  \right).
  \label{eq:sum-or-tradeoff-derived}
\end{equation}
\end{proposition}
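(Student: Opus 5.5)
The plan is to reduce the bit-query algorithm for $\ell$-Sum-OR Parity to a word-query algorithm for the $\ell$-Sum-Filtered Parity setting with random $(H,G)$. Then I would sandwich the selected recording capacity between the lower bound of \cref{lem:sum-or-forces-capacity} and the HHL upper bound of \cref{thm:selected-capacity-upper-bound}.

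First, fix the OR block $z$ to the all-zero string. The worst-case algorithm $\calA$ then computes, for every input, the thresholded parity with success probability at least $1/2+\varepsilon$. Next, apply \cref{lem:bit-to-word-simulation}, part~1, to convert each bit query to $H$ (and each query to the one-bit $G$, for which $r=1$) into at most two standard word queries. This yields an algorithm $\calB$ with $O(T)$ queries to $H$ and to $G$ and $S+O(\log N)=O(S)$ qubits, using $S=\Omega(\log M)$ and $\log N=\Theta(\log M)$. Running $\calB$ on uniformly random $H\colon[M]\to[N]$ and $G$ is legitimate, since worst-case correctness implies correctness on every fixed pair $(H,G)$. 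By \cref{lem:sum-or-forces-capacity}, applied to $\calB$ (its proof works for any fixed $H^*$ with $\mu/2\le|Y_{H^*}|\le 2\mu$, where the threshold does not fire and the target is exactly $F(H^*,G)$), we get $K(\calB)=\Omega(\varepsilon^2 M)$.

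Second, translate the space hypothesis. Since $N=\Theta(M^{\ell-1})$, the condition $S=O(M^{(\ell-1)/((\ell+1)(2\ell+1))})$ is exactly $S=O(N^{1/((\ell+1)(2\ell+1))})$, and $S=\Omega(\log M)$ gives $S=\Omega(\log N)$. Hence \cref{thm:selected-capacity-upper-bound} applies to $\calB$ and gives $K(\calB)=O(S^{2\ell+2}T^2N^{-2/(\ell+1)})=O(S^{2\ell+2}T^2M^{-2(\ell-1)/(\ell+1)})$. Combining the two bounds gives $S^{2\ell+2}T^2=\Omega(\varepsilon^2 M^{1+2(\ell-1)/(\ell+1)})$. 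Taking square roots, $S^{\ell+1}T=\Omega(\varepsilon M^{1/2+(\ell-1)/(\ell+1)})=\Omega(\varepsilon M^{3/2-2/(\ell+1)})$. The tilde absorbs the constant-factor query blowup and the logarithmic additive space overhead.

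The main obstacle is the legitimacy of feeding random oracles to an algorithm that HHL's bound treats as uniform with a fixed query budget. One point is that $\calB$'s queries must be split into at most $T$ to each oracle: I would simply bound each count by the total $O(T)$. The other point is that \cref{lem:sum-or-forces-capacity} requires the bias $\beta_{H^*}\ge 2\varepsilon$ for each good $H^*$. This follows because worst-case success holds for every $G$, so the average over $G$ inherits it. For the reversal statement afterwards, I would then check that $S\le M^{(\ell-3)/(4(\ell+1)^2)-\eta}$ makes $S^{\ell+1}\le M^{(\ell-3)/(4(\ell+1))-(\ell+1)\eta}$. This gives $T=\widetilde\Omega(M^{1+(\ell-3)/(4(\ell+1))+(\ell+1)\eta})$, which is $\omega$ of the classical bound.
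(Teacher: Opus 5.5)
Your proposal is correct and takes essentially the same route as the paper. You fix $z=0$, convert bit queries to word queries via \cref{lem:bit-to-word-simulation}, bound $K$ from below by \cref{lem:sum-or-forces-capacity} and from above by \cref{thm:selected-capacity-upper-bound}, then use $N=\Theta(M^{\ell-1})$ and take square roots; the only cosmetic difference is that you absorb the $O(\log N)$ space overhead into $O(S)$ at the start, whereas the paper carries $(S+C\log N)$ to the end.
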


\begin{proof}
Fix $z$ to zero, and omit the queries to the $z$ truth table.
Writing the remaining $G$-queries in the equivalent Hadamard representation
does not change the live space or query count.
Together with the translation by \cref{lem:bit-to-word-simulation},
\cref{thm:selected-capacity-upper-bound} applies to the resulting algorithm.
Combining it with \cref{lem:sum-or-forces-capacity} yields, for an absolute
constant $C$,
\begin{equation*}
  \varepsilon^2M
  \le
  \widetilde O\!\left(
    {(S+C\log N)}^{2\ell+2}T^2N^{-2/(\ell+1)}
  \right).
\end{equation*}
Using $N=\Theta(M^{\ell-1})$ and taking square roots gives
\begin{equation*}
  {(S+C\log N)}^{\ell+1}T
  =
  \widetilde\Omega\!\left(
    \varepsilon
    M^{3/2-2/(\ell+1)}
  \right).
\end{equation*}
Since $S=\Omega(\log M)$ and $\log N=\Theta(\log M)$, this simplifies to
the displayed bound.
This proves the tradeoff.
\end{proof}

\section{Choosing the Parameters}\label{app:sum-or-parameters}

\subsection{Choosing the OR Block}

In \cref{sec:sum-or} we choose the OR block length $L$ to be
$M^{1+(\ell-3)/(4(\ell+1))}$ (without the ceiling).
It is the midpoint of the interval that allows both the unrestricted advantage
and the small-space reversal.
The next proposition shows that in detail.

\begin{proposition}
\label{prop:sum-or-balance}
Replace the relation in the definition by $N=M^{p+o(1)}$, where $0<p<\ell$.
Consider uniform quantum algorithms with advantage
$\varepsilon\in (0, 1/2]$ and space $S$.
Assume that $S=M^{o(1)}$ and $S=\Omega(\log M)$.
Let $c_\ell(p)$ denote the exponent of the deterministic classical query upper bound for
the filtered-parity block.
Let $q_\ell(p)$ denote the exponent yielded by the bounded-space quantum query lower bound.
That is,
\begin{equation*}
  c_\ell(p):=\max\{1,\ell-p\},
  \qquad
  q_\ell(p)
  :=
  \frac{\ell}{2}
  -
  \frac{p(\ell-1)}{2(\ell+1)}.
\end{equation*}
Let the OR block have length $L = M^{\alpha+o(1)}$.
The unrestricted advantage and the small-space reversal both hold when
$c_\ell(p)<\alpha<q_\ell(p)$.
The width of this interval is uniquely maximized at $p=\ell-1$.
At this value, its midpoint is
$\alpha=1+(\ell-3)/(4(\ell+1))$.
\end{proposition}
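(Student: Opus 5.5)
The plan is to redo the three bounds of \cref{sec:sum-or} with $N=M^{p+o(1)}$ in place of $N=\Theta(M^{\ell-1})$, read off their exponents, and then optimize over $p$. Set $\mu=\binom{M}{\ell}/N=M^{\ell-p+o(1)}$. Since $p<\ell$, we have $\mu\to\infty$, so \cref{cor:sum-or-concentration} still applies: $\mu/2\le|Y_H|\le2\mu$ holds with probability $1-O(1/\mu)$. The threshold in the definition becomes $\lceil 2\mu\rceil$.

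\emph{Classical side.} The algorithm of \cref{prop:sum-or-unrestricted} reads all of $H$ with $M\log N=M^{1+o(1)}$ bit queries. It then queries at most $\lceil2\mu\rceil=M^{\ell-p+o(1)}$ selected $G$-bits and the $L$ bits of $z$. Without the OR block this costs $M^{c_\ell(p)+o(1)}$, where $c_\ell(p)=\max\{1,\ell-p\}$. Fixing $G\equiv0$ still restricts the function to $\operatorname{OR}_L$, so $R=\Omega(L)$. Hence, when $\alpha>c_\ell(p)$, we get $D=R=M^{\alpha+o(1)}$. The Grover version gives $Q\le M^{c_\ell(p)+o(1)}+O(\sqrt L)$. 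Both terms are $M^{\max\{c_\ell(p),\alpha/2\}+o(1)}$, which is polynomially smaller than $R$ exactly when $\alpha>c_\ell(p)$, because $\alpha/2<\alpha$ always. This gives the unrestricted advantage.

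\emph{Space-bounded side.} The proof of \cref{lem:sum-or-forces-capacity} uses only the concentration of $|Y_H|$ and \cref{lem:parity-to-capacity}, so it gives $K(\calA)=\Omega(\varepsilon^2\mu)=M^{\ell-p-o(1)}$. Next, \cref{thm:selected-capacity-upper-bound} together with \cref{lem:bit-to-word-simulation} gives $K(\calA)\le \widetilde O(S^{2\ell+2}T^2N^{-2/(\ell+1)})$. The side condition $S=O(N^{1/((\ell+1)(2\ell+1))})$ holds automatically because $S=M^{o(1)}$. Comparing the two bounds and using $S^{\ell+1}=M^{o(1)}$ gives
\begin{equation*}
  T\ge M^{\frac{\ell-p}{2}+\frac{p}{\ell+1}-o(1)}
  =M^{q_\ell(p)-o(1)},
\end{equation*}
since $-\tfrac p2+\tfrac{p}{\ell+1}=-\tfrac{p(\ell-1)}{2(\ell+1)}$. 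Hence $Q_S$ exceeds $R=M^{\alpha+o(1)}$ polynomially when $\alpha<q_\ell(p)$. Together with the classical side, this proves that both properties hold on the interval $c_\ell(p)<\alpha<q_\ell(p)$.

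\emph{Optimizing the width.} Let $w(p)=q_\ell(p)-c_\ell(p)$.
\begin{itemize}
  \item For $p\le\ell-1$, we have $c_\ell=\ell-p$, so $w(p)=-\tfrac{\ell}{2}+\tfrac{p(\ell+3)}{2(\ell+1)}$. This is strictly increasing in $p$.
  \item For $p\ge\ell-1$, we have $c_\ell=1$, so $w(p)=\tfrac{\ell}{2}-1-\tfrac{p(\ell-1)}{2(\ell+1)}$. This is strictly decreasing in $p$.
\end{itemize}
So $w$ is continuous, piecewise linear and unimodal, with its unique maximum at $p=\ell-1$. At $p=\ell-1$:
\begin{itemize}
  \item $c_\ell=1$;
  \item $q_\ell=\frac{\ell(\ell+1)-(\ell-1)^2}{2(\ell+1)}=\frac{3\ell-1}{2(\ell+1)}$;
  \item the midpoint is $\frac{1+q_\ell}{2}=\frac{5\ell+1}{4(\ell+1)}=1+\frac{\ell-3}{4(\ell+1)}$, as claimed.
\end{itemize}

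The main obstacle is justifying that the HHL capacity bound (\cref{thm:K-time-space-upperbound}) holds uniformly for $N=M^{p+o(1)}$ and not only for $N=\Theta(M^{\ell-1})$. I would check that its statement in \cite{HHL26} is expressed in terms of $N$, $S$ and $T$ with no hidden assumption tying $M$ to $N$. If such an assumption exists, the claim should be restricted to the range of $p$ that HHL covers. A secondary point is that the $o(1)$ losses, from $\log N$ factors, from $S^{\ell+1}$, and from $N=M^{p+o(1)}$, must all be absorbed. This works because all the required inequalities between exponents are strict.
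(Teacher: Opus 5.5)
Your proposal is correct and follows essentially the same route as the paper. You read off $c_\ell(p)$ from the store-$H$-then-read-$Y_H$ algorithm, with the OR restriction giving $R=\Omega(L)$ and Grover giving the $\alpha/2$ term. You obtain $q_\ell(p)$ by comparing the capacity lower bound $\Omega(\varepsilon^2\mu)$ with the HHL upper bound. You then maximize the piecewise-linear width $q_\ell-c_\ell$ at $p=\ell-1$ and compute the midpoint.

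Your caveat is a real dependency. The HHL capacity bound must hold for $N=M^{p+o(1)}$, not only for $N=\Theta(M^{\ell-1})$, and the paper's proof also relies on this implicitly without comment.
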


\begin{proof}
We first identify the two exponents in the statement.
The classical upper bound could be derived from the algorithm we described
before.
\begin{align*}
  M^{c_\ell(p)+o(1)}
  &= M+\binom{M}{\ell}/N\\
  &= M^{\max\{1,\ell-p\}+o(1)}.
\end{align*}
We used the fact that $\binom{M}{\ell}/N=M^{\ell-p+o(1)}$.
This explains the definition of $c_\ell(p)$.
The exponent $q_\ell(p)$ comes from the capacity lower bound.
Specifically,
$M^{q_\ell(p)}=N^{1/(\ell+1)}\sqrt{\binom{M}{\ell}/N}$.
Substituting $N=M^{p+o(1)}$ gives the following exponent.
\begin{align*}
  \frac{\ell-p}{2}
  +\frac{p}{\ell+1}
  &= \frac{\ell}{2} - \frac{p}{2} + \frac{p}{\ell+1}
  \\
  &= \frac{\ell}{2} - \frac{p(\ell-1)}{2(\ell+1)}
  \\
  &= q_\ell(p).
\end{align*}

When $\alpha>c_\ell(p)$, the deterministic algorithm has query cost
$M^{\alpha+o(1)}$.
The OR restriction gives the matching randomized lower bound.
Thus $D=R=M^{\alpha+o(1)}$ in this regime.
The unrestricted quantum algorithm gives the upper bound
\begin{equation*}
  Q
  \le
  M^{\max\{\alpha/2,c_\ell(p)\}+o(1)}.
\end{equation*}
Hence $\alpha>c_\ell(p)$ gives the unrestricted quantum advantage.
For $S=M^{o(1)}$, the capacity lower-bound argument gives
\begin{equation*}
  Q_S
  \ge
  M^{q_\ell(p)-o(1)}.
\end{equation*}
Thus $\alpha<q_\ell(p)$ gives the small-space reversal.
It is therefore enough to choose $c_\ell(p)<\alpha<q_\ell(p)$.
The choice $\alpha = 1 + (\ell-3)/(4(\ell+1))$ and $p = \ell-1$
satisfies these inequalities.
We choose them since the choice of $p$ maximizes the width of the interval
$\big(c_\ell(p), q_\ell(p)\big)$.
Specifically, for $0<p\le\ell-1$,
\begin{equation*}
  q_\ell(p)-c_\ell(p)
  =
  -\frac{\ell}{2}
  +\frac{\ell+3}{2(\ell+1)}p.
\end{equation*}
For $\ell-1\le p<\ell$,
\begin{equation*}
  q_\ell(p)-c_\ell(p)
  =
  \frac{\ell}{2}-1
  -\frac{\ell-1}{2(\ell+1)}p.
\end{equation*}
The first branch is strictly increasing in $p$.
The second branch is strictly decreasing in $p$.
The unique maximum is therefore attained at $p=\ell-1$.
At $p=\ell-1$, we have $c_\ell(\ell-1)=1$ and
$q_\ell(\ell-1)=3/2-2/(\ell+1)$.
Hence we need to choose $\alpha$ such that
$1<\alpha<3/2-2/(\ell+1)$.
So we just choose the midpoint of this interval.
\end{proof}

\subsection{Choosing the Filter Scale}

The construction sets $M=N^{1/(\ell-1)}$ for another reason: namely, to make the
selected set $Y_H$ have the same order of magnitude as the list of $H$-values.
To see this, ignore floors and write $N=M^{p+o(1)}$; then
$\mu:=\E_H[|Y_H|]=\binom{M}{\ell}/N=M^{\ell-p+o(1)}$.
The classical deterministic algorithm that stores $H$ and reads the selected
$G$-bits has parity-block query scale $M+\binom{M}{\ell}/N$,
which is balanced when $\ell-p=1$.
Equivalently, $M=N^{1/(\ell-1)}$.

\section*{Acknowledgements and AI-disclosure}
The authors used large language models as AI-assisted research and writing
tools throughout the preparation of this manuscript.
These tools were used to help brainstorm ideas and explore proof strategies.
Portions of the manuscript text were redrafted or modified with AI assistance
across all sections.
All final mathematical claims, algorithms, proofs, citations, and wording were
reviewed, edited, and validated by the authors.
The authors assume responsibility for all content of the paper.

\bibliographystyle{alphaurl}
\bibliography{main}

\end{document}